%\documentclass[sigconf,anonymous,review]{acmart}

% for arxiv change to:
\documentclass[sigconf]{acmart}
% for arxiv uncomment the following two lines:
\settopmatter{printacmref=false} % Removes citation information below abstract
\renewcommand\footnotetextcopyrightpermission[1]{} % removes footnote with conference information in first column

\usepackage{amsmath, amsthm}
\usepackage{xcolor}
\usepackage{graphicx}
\usepackage{booktabs, makecell, tabularx}
\usepackage{rotating}
\usepackage[export]{adjustbox}
\usepackage{enumitem}
\usepackage{bbm}
\usepackage{listings}
\usepackage{cleveref}
\usepackage{subcaption}
\usepackage{mdframed}
\usepackage{comment}% http://ctan.org/pkg/comment

\DeclareMathOperator*{\R}{\mathbb{R}}
\DeclareMathOperator*{\E}{\mathbb{E}}

\DeclareMathOperator{\mean}{mean}
\DeclareMathOperator{\sign}{sign}
\DeclareMathOperator{\Var}{Var}
\newcommand{\eqdef}{\mathbin{\stackrel{\rm def}{=}}}

\definecolor{teal}{rgb}{0.0,.5,.5}

\definecolor{indu}{rgb}{0.0,.5,.5}

\newtheorem{theorem}{Theorem}
\newtheorem{fact}{Fact}
\newtheorem{observation}{Observation}
\newtheorem{claim}{Claim}
\newtheorem{definition}{Definition}

\crefname{section}{Section}{Sections}
\crefname{equation}{Equation}{Equations}
\crefname{theorem}{Theorem}{Theorems}
\crefname{lemma}{Lemma}{Lemmas}
\crefname{listing}{Listing}{Listings}
\crefname{figure}{Figure}{Figures}

\newenvironment{myquote}[1]%
{\list{}{\leftmargin=#1\rightmargin=#1}\item[]}%
{\endlist}

\AtBeginDocument{%
	\providecommand\BibTeX{{%
			\normalfont B\kern-0.5em{\scshape i\kern-0.25em b}\kern-0.8em\TeX}}}

%\setcopyright{acmcopyright}
%\copyrightyear{2018}
%\acmYear{2018}
%\acmDOI{10.1145/1122445.1122456}

%% These commands are for a PROCEEDINGS abstract or paper.
%\acmConference[TheWebConf '22]{TheWebConf '22: The Web Conference}{April 25--29, 2022}{Lyon, France}
%\acmBooktitle{Woodstock '18: ACM Symposium on Neural Gaze Detection,
%	June 03--05, 2018, Woodstock, NY}
%\acmPrice{15.00}
%\acmISBN{978-1-4503-XXXX-X/18/06}

  % branched from Michael Forbes' macro document
  % September 2014
  \usepackage{nth}
  \usepackage{intcalc}

\title{How to Quantify Polarization in Models of Opinion Dynamics}
%\title{Polarization Dynamics and Quantification in Social Networks}
%\title{Quantifying Polarization Dynamics in Social Networks}
%\title{The Dynamics of Group-Based Polarization in Social Networks}

% anonymizing authors
%\begin{comment}
\author{Christopher Musco}
\email{cmusco@nyu.edu}
%\orcid{1234-5678-9012}
\affiliation{%
  \institution{New York University}
    \city{}\state{}\country{}
%  \city{New York}
%  \state{NY}
%  \country{USA}
}

\author{Indu Ramesh}
\email{ir914@nyu.edu}
\affiliation{%
  \institution{New York University}
    \city{}\state{}\country{}
%  \city{New York}
%  \state{NY}
%  \country{USA}
}
\author{Johan Ugander}
\email{jugander@stanford.edu}
\affiliation{%
  \institution{Stanford University}
   \city{}\state{}\country{}
%  \city{Stanford}
%  \state{CA}
%  \country{USA}
}
\author{R. Teal Witter}
\email{rtealwitter@nyu.edu}
\affiliation{%
  \institution{New York University}
  \city{}\state{}\country{}
%  \city{New York}
%  \state{NY}
%  \country{USA}
}
%\end{comment}

\begin{document}

\begin{abstract}
	It is widely believed that society is becoming increasingly polarized around important issues, a dynamic that does not align with common mathematical models of opinion formation in social networks. In particular, measures of polarization based on opinion variance are known to decrease over time in frameworks such as the popular DeGroot model. Complementing recent work that seeks to resolve this apparent inconsistency by modifying opinion models, we instead resolve the inconsistency by proposing changes to how polarization is quantified. 

    We present a natural class of group-based polarization measures that  capture the extent to which opinions are clustered into distinct groups. Using theoretical arguments and empirical evidence, we show that these group-based measures display interesting, non-monotonic dynamics, even in the simple DeGroot model. In particular, for many natural social networks, group-based metrics can increase over time, and thereby correctly capture perceptions of increasing polarization.

    Our results build on work by DeMarzo et al., who introduced a group-based polarization metric based on ideological alignment. We show that a central tool from that work, a limit analysis of individual opinions under the DeGroot model, can be extended to the dynamics of other group-based polarization measures, including established statistical measures
    like bimodality. 

    We also consider local measures of polarization that operationalize how polarization is perceived in a network setting. In conjunction with evidence from prior work that group-based measures better align with real-world perceptions of polarization, our work provides formal support for the use of these measures in place of variance-based polarization in future studies of opinion dynamics.

\end{abstract}

\begin{CCSXML}
	<ccs2012>
	<concept>
	<concept_id>10003120.10003130.10003131.10003292</concept_id>
	<concept_desc>Human-centered computing~Social networks</concept_desc>
	<concept_significance>500</concept_significance>
	</concept>
	<concept>
	<concept_id>10003752.10010070.10010099.10003292</concept_id>
	<concept_desc>Theory of computation~Social networks</concept_desc>
	<concept_significance>500</concept_significance>
	</concept>
	</ccs2012>
\end{CCSXML}

\ccsdesc[500]{Human-centered computing~Social networks}
\ccsdesc[500]{Theory of computation~Social networks}

%%
%% Keywords. The author(s) should pick words that accurately describe
%% the work being presented. Separate the keywords with commas.
\keywords{opinion dynamics, social networks, polarization, DeGroot model}

\maketitle
% for arxiv uncomment the following line:
\pagestyle{plain} % removes running headers

\section{Introduction}
Polarization of individual opinions and beliefs has become a topic of intense interest in recent years, especially in relation to politics \cite{democracies_divided_polarization_book}, and politically sensitive issues like climate change \cite{McCrightDunlap:2011} and public health \cite{Holone:2016,GreenEdgertonNaftel:2020}.
Polarization is often believed to threaten social stability; for example, it has been blamed for legislative deadlock \cite{Binder:2014,Bianco2020_polarization_congressional_deadlock}, decreased trust and engagement in the democratic process \cite{declining_trust_congress_polarization,LaymanCarseyHorowitz:2006,pernicious_polarization_harm_democracy}, and hindered responses to 
% public 
crises like the COVID-19 pandemic \cite{HartChinnSoroka:2020,GreenEdgertonNaftel:2020}.
In response to its impact, there is growing interest in using mathematical models of \emph{opinion dynamics} to formally study how polarization arises and evolves. Such models provide simple rules for how an individual's opinion on a topic changes in response to influence from that individual's social connections. Mathematical models of opinion dynamics offer a useful abstraction for studying important real-world phenomena \cite{AcemogluOzdaglar:2011}.
% Commenting out for space considerations
% Teal: I think the current sentence is better than the proposed sentence because the proposed one seems to imply we'll talk about all the models whereas the current sentence says that mathematical models in general are useful
% \indu{I don't like the sentence before this, for some reason. It feels like too much of vanilla statement, like of course models are usefful because they abstract. I feel like "a number of models have been proposed which offer abstractions of...etc...might be better}
For example, they have been used to study the impact of biased assimilation \cite{HegselmannKrause2002,DandekarGoelLee:2013} and the effect of outside 
actors\footnote{
Actors like news agencies, social media companies, advertisers, and
governments can influence opinions in a social network by swaying
the strength of social connections, possibly by promoting or hiding
social media posts, creating fake user accounts and content, or running advertisements. 
By modeling these actions mathematically within an opinion dynamics framework, researchers can better understand how susceptible networks are to adversarial attacks \cite{AbebeChanKleinberg:2021,GeschkeLorenzHoltz:2019} and how ``filter bubbles'' emerge \cite{Pariser,chitra20analyzing}.}
on polarization  \cite{MuscoMuscoTsourakakis:2018,HazlaJinMossel:2019,GaitondeKleinbergTardos:2020,BrooksPorter:2020}.

To continue effectively leveraging such models, 
%in understanding polarization
we first need to 
%\indu{would it be effective in terms of flow to first talk about closely examining the assumptions underlying such models, state the idea that most prior work defaults to polarization = variance, then pose our more general question, and express our surprise that we are one of the few studying it} 
address a basic and important question:
\begin{myquote}{0.2in}\it
	How should the broad and imprecise  concept of polarization be {quantified} in mathematical models of opinion dynamics? 
\end{myquote}

Surprisingly, this question has received little attention. Most prior work defaults to quantifying polarization based on the overall \emph{variance} of societal opinions (opinions are typically encoded as real valued numbers) \cite{filterbubble,GaitondeKleinbergTardos:2020,BrooksPorter:2020}. While mathematically convenient, any variance-based approach faces a basic challenge: standard models of opinion formation in social networks, like the ubiquitous DeGroot learning model \cite{Degroot}, predict gradual convergence of opinion variance towards zero over time. This inevitable decrease stands in contradiction to the fact that, qualitatively, polarization is considered to exhibit far more interesting dynamics. For example, it is widely believed that polarization is currently \emph{increasing} across the globe on a variety of issues \cite{democracies_divided_polarization_book,McCrightDunlap:2011}, and that its dynamics have been impacted by forces such as the rise of social media \cite{Pariser}.

\subsection{Our Approach and Main Results}
Given the shortcomings of opinion variance as a measure of polarization, we address the central question of how to best quantify polarization by evaluating metrics through a \emph{dynamic lens}.
%\indu{just my thought: for some reason, the idea of a dynamic lens is not quite clear to me; variance is studied through dynamics too (I realize it doesn't work out in DeGroot), but I think for the sake of clarity it might be better to just focus on ``group-based", like maybe take out that phrase?}.
In particular, our goal is to identify natural metrics whose dynamics under simple models of opinion formation, like the DeGroot model, agree with {observed dynamics} of polarization in society. 
%As an example, any good quantitative polarization measure should be able to \emph{increase} over time in natural social networks.

Towards this end, we introduce a class of \emph{group-based} metrics for polarization. We use ``group-based'' to reference the idea of a measure that is high when there are well-separated groups of individuals with different opinions. We formalize this notion in Section \ref{sec:prelims} by assuming shift- and scale-invariance, which are properties that naturally align with axiomatic treatments of clustering \cite{Kleinberg:2003}. For now, we leave ``group-based'' as an intuitive definition and illustrate with an example. Consider the following opinion vectors on a six node social network (each entry is one individual's opinion):
\vspace{-.5em}
\begin{align*}
	\mathbf{a} &= [-1, -.6, -.2, .2, .6,1 ] & 
	\mathbf{b} &= [.5,.5,.5,-.5,-.5,-.5]
\end{align*}
While $\mathbf{a}$ has larger variance than $\mathbf{b}$ (2.8 vs. 1.5), $\mathbf{b}$ would have higher polarization under a group-based measure, since opinions are more clearly clustered into two distinct groups. This cluster structure could be quantified, for example, by any statistical measure of bimodality,  like the popular Sarle's bimodality coefficient\footnote{See Definition \ref{def:bimodality} for more details and discussion.}. Sarle's coefficient is equal to $({\gamma^2 + 1})/{\kappa}$, where $\gamma$ is skewness and $\kappa$ is kurtosis, and evaluates to $0.58$ for $\mathbf{a}$, but a higher value of $1$ for $\mathbf{b}$.  

In this work, we explore two primary classes of related group-based measures:
\begin{description}[style=unboxed,leftmargin=0cm]
	\item[Statistical Measures (Section \ref{sec:stat_measures})] This class includes functions that, like Sarle's bimodality coefficient, measure group structure in an opinion distribution by looking at moments beyond the second (i.e., beyond variance). For example, the bimodality coefficient incorporates third and fourth moment information. 
	\item[Local Measures (Section \ref{sec:local_measures})] This class includes metrics that take into account local social connections on perceptions of group-structure.  For example, we study \emph{local agreement}, defined as the average percentage of an individuals social connections who agree on a particular topic (i.e., have an opinion on the same side of the mean). Networks with high local agreement may appear more polarized to individuals, who feel isolated in opinion bubbles. 
\end{description}
\smallskip

We show that these group-based measures of polarization behave very differently than  {variance-based measures}, exhibiting interesting, non-monotonic dynamics even in the simple DeGroot model. In particular, we prove that  the group-based measures converge to values that \emph{depend on the structure of the underlying social network}  governing the opinion dynamics. As a result, instead of always converging to zero like variance-based measures, they can \emph{increase} over time for certain networks.  
Our work builds on a result of DeMarzo, Vayanos, and Zwiebel \cite{DeMarzoVayanosZwiebel:2003}, who study another group-based metric, which we call ``ideological alignment''. Their work is based on an analysis of  the limiting behavior of a normalized vector containing each individuals divergence from the mean opinion under the DeGroot opinion dynamics model. We show that this analysis extends to both statistical and local group-based polarization measures. 

%Our paper builds on work of DeMarzo, Vayanos, and Zwiebel \cite{DeMarzoVayanosZwiebel:2003}, who study ideological alignment under the DeGroot model. In contrast to variance-based measures of polarization, they observe that alignment  \emph{increases} over time.
 
% In fact, as we discuss in Section \ref{sec:alignment _measures}, under mild assumptions, any realization of the DeGroot dynamics converges to perfect alignment, where there are two groups of individuals who internally agree (fall on the same side of the mean opinion) for all issues. 
%This finding offers striking mathematical support for perceived increases in societal polarization, even in a world that is moving towards consensus on important issues. 

\begin{figure}
	\centering
	\includegraphics[width=\columnwidth]{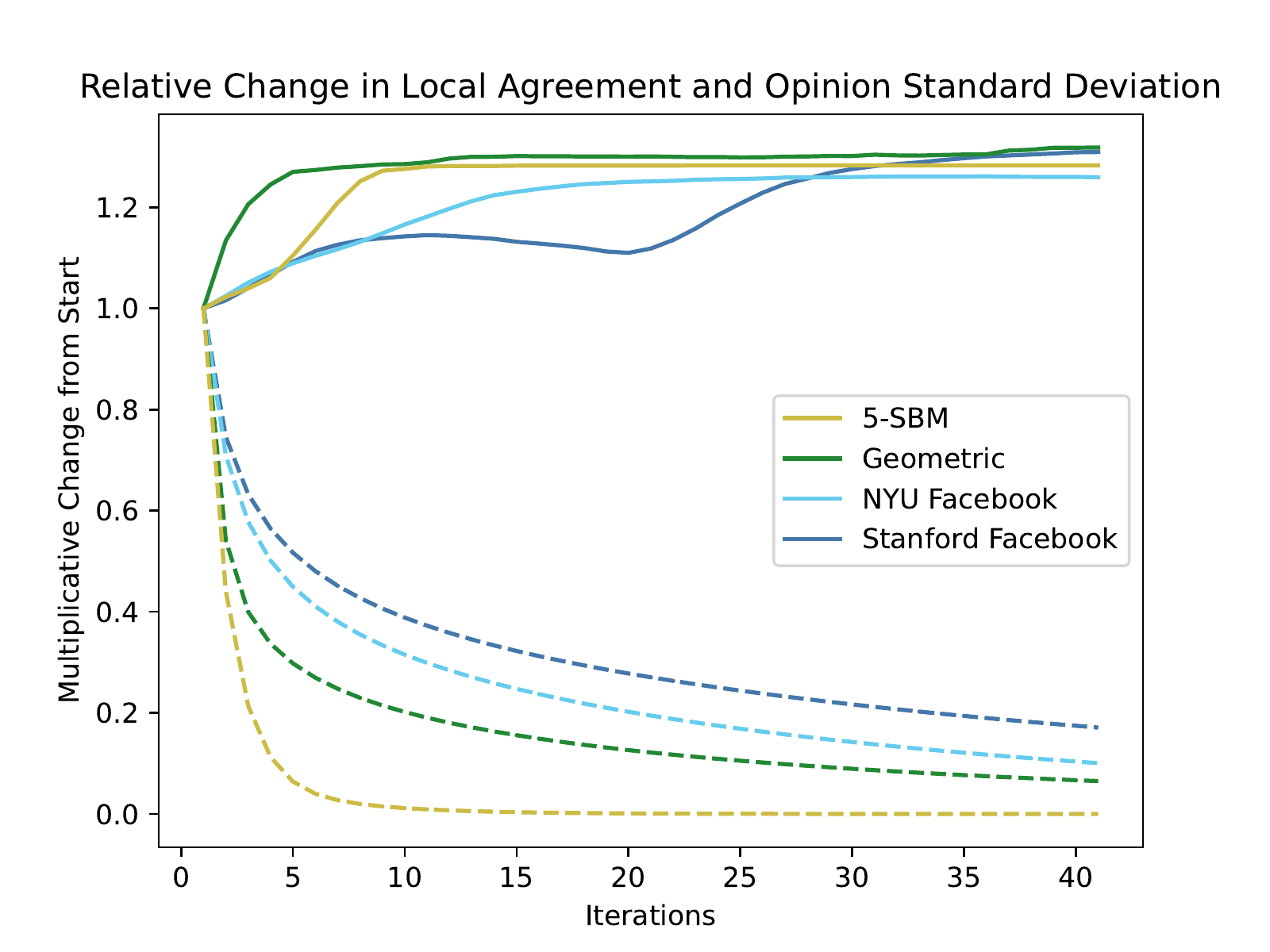}
	\caption[]{
		Number of iterations of DeGroot's model vs. two measures of polarization in synthetic and  real-world social networks.\footnotemark~ Dotted lines plot opinion standard deviation, a variance-based measure, while solid lines plot average local agreement, a natural group-based measure, discussed in Section \ref{sec:local_measures}. In contrast to standard deviation, for all networks the group-based measure increases over time, which is  consistent with real-world perceptions of how polarization can evolve. This finding provides evidence that the group-based measure may be a more appropriate method for quantifying polarization than the variance-based one.
	}
	\label{fig:stdvsagree}
\end{figure}
\footnotetext{NYU and Stanford are graphs from the
	Facebook100 data set \cite{traud2012social}. 
	5-SBM is a Stochastic Block Model graph on 1000 nodes split into
	five communities with the probability of intra- and inter-community edges equal to  $p=1/10$ and $q=1/100$, respectively.
	Geometric is a proximity graph with 1000 nodes
	on the unit square with $r=.1$ the cutoff for an edge, generated using the NetworkX Python library \cite{HagbergSchultSwart:2008}.}

Moreover, we demonstrate empirically that increases in group-based polarization are not only possible, but actually common in natural synthetic and real-world social networks. A sample result for average local agreement measure (discussed in Section \ref{sec:local_measures}) appears in Figure \ref{fig:stdvsagree}. 
Dotted lines plot opinion standard deviation, a variance-based measure of polarization. Solid lines plot average local agreement, a natural group-based measure of polarization, discussed in Section \ref{sec:local_measures}. In contrast to standard deviation, for all networks the group-based measure increases over time, which is  consistent with real-world perceptions of how polarization evolves.
% This finding provides evidence that the group-based measure may be a more appropriate method for quantifying polarization than the variance-based measure.
We conclude that group-based measures not only have the capacity to model interesting dynamics, but also better align with perceptions of increasing polarization in reality. 

For specific group-based measures, we provide additional theoretical support for increasing polarization over time. For example, in Section \ref{sec:stat_measures}, we give a heuristic analysis for the limiting Sarle's bimodality of opinions in stochastic block-model graphs. We show that the equilibrium value of this measure under the DeGroot dynamics  is large for social network graphs with a small number of communities, a reasonable assumption of real-world networks. In Section \ref{sec:local_measures}, we also show that average local agreement in a social network converges to a value that depends on the second eigenvalue of the normalized adjacency matrix $\mathbf{D}^{-1}\mathbf{A}$. Polarization increases to a larger value when this eigenvalue is close to 1, which is empirically the case in a variety of real-world social network graphs. 

%\noindent
%\textbf{Paper Roadmap.}
%After providing preliminary definitions for the DeGroot model in Section \ref{sec:prelims}, we introduce and prove a limit result from  \cite{DeMarzoVayanosZwiebel:2003}  in Section \ref{sec:limit_analysis}. In Sections  \ref{sec:stat_measures} and \ref{sec:local_measures}, we explore the implications of this result on statistical and local measures of polarization, respectively.

%This finding implies that local agreement increases in these networks over time.

%Moreover, by studying both real-world social networks and models of synthetic networks, we argue that group-measures often \emph{do increase} for natural social graphs, which aligns with current perceptions of increasing polarization. 

%Moreover, these measures are actually more in line with how individuals {perceive} polarization in society than variance-based measures.

\subsection{Conclusions and Recommendations}
Our findings provide formal support for using group-based measures to quantify polarization in mathematical models of opinion dynamics.
% in future studies that mathematically model opinion dynamics. 
The unrealistic monotonic dynamics of variance-based measures have led past studies to abandon simple opinion models like the DeGroot dynamics, and to adopt alternative, more complicated models to mathematically recover interesting polarization dynamics. For example, the Friedkin-Johnson dynamics \cite{fj}, bounded confidence model \cite{Lorenz:2007}, and geometric models have all seen recent attention \cite{HazlaJinMossel:2019,GaitondeKleinbergTardos:2021}.
A central conclusion of our work is that, alternatively, it may be the \emph{definition of polarization}, not the model, that lacks richness for understanding societal polarization. By turning from variance-based to natural group-based polarization measures, we see interesting dynamics even in the simplest models. 
 
Beyond our work, the recommendation to use group-based measures is also supported by empirical evidence that these measures are more aligned with how individuals perceive polarization in society  than variance-based metrics \cite{FiorinaAbrams:2008}. In particular, it has
% As stated  in \cite{LevenduskyMalhotra:2015}, 
% \begin{quote}
% 	``False [perceptions of] polarization occurs because of the underlying psychology of classification. When people view themselves as part of groups and believe these identities to be salient, perceived differences between groups are accentuated.''
% \end{quote}
been argued that perceived polarization does not correlate with significant \emph{absolute} differences in opinion (which drive overall opinion variance) \cite{LevenduskyMalhotra:2015,ChambersBaronInman:2006}. And while we cannot deny individual feelings of rising polarization in society\footnote{Polling data supports the widespread feeling that society is polarized. For example, $70\%$ of Americans believe that the United States is ``greatly divided'' when it comes to the country's most important values \cite{Monmouth-University-Polling-Institute:2019}.}, there is less evidence for increases in {absolute difference} in individual beliefs over time\footnote{A separate but related issue is the concept of \emph{affective polarization} \cite{Martin-Saveski:2022}, which asserts that individual responses to differences in opinions are also changing. In particular, feelings of hatred and distrust of those who hold opposite opinions seem to be increasing. There is evidence that group structure also contributes to these changes.} \cite{FiorinaAbramsPope:2005}. It follows that the concept of ``polarization'' may not equate to the absolute differences in opnion measured by opinion variance.
Instead, it has been argued that perceptions of polarization stem from perceptions of group-structure \cite{LevenduskyMalhotra:2015}. 
% As stated  in \cite{LevenduskyMalhotra:2015}, 
% \begin{quote}
% 	``False [perceptions of] polarization occurs because of the underlying psychology of classification. When people view themselves as part of groups and believe these identities to be salient, perceived differences between groups are accentuated.''
% \end{quote}
%
In fact, even the origin of the term ``polarization'' in the physical sciences suggests a group-based interpretation \cite{oed}. While sociological and psychological arguments for how to best quantify polarization are beyond the scope of this paper, the initial alignment between prior work and our findings is promising.

\subsection{Relation to Prior Work}
To the best of our knowledge, there has not been prior work on the problem of assessing the quality of various polarization metrics in mathematical models of opinion dynamics. Prior results have largely defaulted to variance-based measures. As mentioned, our work is most closely related to that of DeMarzo, Vayanos, and Zwiebel \cite{DeMarzoVayanosZwiebel:2003}, whose limit analysis we adopt directly (providing a new proof in Section \ref{sec:limit_analysis}). 
The main novelty of our work over \cite{DeMarzoVayanosZwiebel:2003} is two fold. First, we show that the approach from \cite{DeMarzoVayanosZwiebel:2003} has implications for a wide class of group-based polarization measures. In particular, we show that the limit result from that work implies that other group-based polarization measures converge to \emph{graph dependent} values, which \emph{can} be large for some graphs. Second, for different group-based measures, we provide experimental evidence and novel theoretical arguments that these group-based measures \emph{will} converge to large values for natural social networks. 

\section{Preliminaries}
\label{sec:prelims}

%\noindent\textbf{Basic Notation.} Bold capital letters denote matrices and bold lowercase letters vectors. We let $\vec{\mathbf{1}}$ be the all ones vector, with length clear from context. For an $n$ dimensional vector $\mathbf{a}\in \R^n$, let $a_i$ denote the $i^\text{th}$ entry, $\|\mathbf{a}\|_2 = (\sum_{i=1}^n a_i^2)^{1/2}$ denote the Euclidean norm, and  $\mean(\mathbf{a}) = \frac{1}{n}\langle\vec{\mathbf{1}},\mathbf{a}\rangle$ denote the mean. For an $n\times m$ matrix $\mathbf{A} \in \R^{n\times m}$, let $A_{ij}$ be the entry in the $i^\text{th}$ row and  $j^\text{th}$ column. 

\smallskip\noindent\textbf{Graph Notation.} The DeGroot opinion dynamics model studied in this work is based on representing social connections via a weighted, undirected social graph, which we denote $G = (V,E)$. $G$ has  $|V| = n$ nodes and $|E| = m$ edges, possibly including self-loops. Let $\mathbf{A}$ be the adjacency matrix of $G$, with $A_{ij} = A_{ji} > 0$ if there is an edge between $i$ and $j$, and $A_{ij} = A_{ji} = 0$ otherwise. Let $\mathcal{N}(i)\subseteq \{1,\ldots, n\}$ denote the set of neighbors of node $i$, which includes all $j$ for which $A_{ij} \neq 0$. If $G$ contains a self-loop at node $i$ then $A_{ii} > 0$ and $i \in \mathcal{N}(i)$. Let $d_i = \sum_{j\in \mathcal{N}(i)} A_{ij}$ denote the degree of node $i$ and let $\mathbf{D}$ be a diagonal matrix containing $d_1, \ldots, d_n$ on its diagonal.

\smallskip\noindent\textbf{Vector Sign Normalization.} 
For a non-zero vector $\mathbf{x}$, let $[\mathbf{x}]^\pm$ denote the vector $\sign(x_i) \cdot \mathbf{x}$, where $x_i$ is the first non-zero entry in $\mathbf{x}$. That is, $[\mathbf{x}]^\pm$ is equal to either $+\mathbf{x}$ or $-\mathbf{x}$, with the sign chosen to ensure that the first non-zero entry is positive.

\subsection{DeGroot Opinion Dynamics}
Mathematical models of opinion dynamics have been studied for decades in economics \cite{hegselmann2005opinion, martins2008continuous,Jackson:2008}, applied math \cite{olfati2007consensus,jia2015opinion}, computer science \cite{urena2019review,DasGollapudiMunagala:2014}, and a variety of other fields \cite{goldman2002knowledge,noorazar2020recent}. We refer the reader to the survey in \cite{AcemogluOzdaglar:2011}. Such models typically view society as a graph, where nodes represent individuals and edges represent social connections of various strength. Connections can be in-person relationships, online relationships (e.g., edges in an online social network), or both. Simple rules and procedures then define how an individual's opinion on an issue (represented as a single discrete or continuous value, or as a vector) evolves over time. 

We focus on one of the earliest and most elegant models of opinion formation:  the DeGroot opinion dynamics \cite{French-Jr.:1956,Degroot}. This model is based on the idea that opinions on a topic, encoded as continuous values, propagate through the social graph via simple averaging. Nodes incorporate the beliefs of their neighbors into their own opinion over time. We formally describe the model below.

%\begin{mdframed}[]
	\begin{definition}[DeGroot Opinion Dynamics] 
		\label{def:degroot}
		Let $G = (V,E)$ be a weighted, undirected graph\footnotemark~ with $n$ nodes, $m$ edges, adjaceny matrix $\mathbf{A}$, and degree matrix $\mathbf{D}$.
	For time steps $t = 0, 1, \ldots, T$, we associate the nodes of $G$ with an opinion vector $\mathbf{z}^{(t)} \in \R^n$ containing numerical values that represent each individual's current view on an issue. Starting with a fixed vector of initial opinions  $\mathbf{z}^{(0)} $, opinions under the DeGroot model evolve via the update:
	\begin{align*}
		%\label{eq:degroot_update}
		% reference not used as far as I can tell
		z^{(t+1)}_{i} =  \frac{1}{D_{ii}}\sum_{j\in \mathcal{N}(i)}A_{ij}z^{(t)}_{i}, 
		\text{or equivalently, } &\mathbf{z}^{(t+1)} = \mathbf{D}^{-1}\mathbf{A}\mathbf{z}^{(t)}.
	\end{align*}
	\end{definition}
%\end{mdframed} 

\footnotetext{The DeGroot model generalizes to directed graphs, but we consider the undirected case for simplicity.}

\medskip\noindent\textbf{Convergence to Consensus}. Like many other models of opinion dynamics, it is well known that the DeGroot dynamics converges to \emph{consensus} in the limit. Formally, we have:
\begin{fact}
	\label{fact:degroot_consensus}
	If $G$ is a connected, undirected, non-bipartite graph then, 
	\begin{align*}
		\mathbf{z}^* = \lim_{t\rightarrow \infty} \mathbf{z}^{(t)} = &c\cdot \vec{\mathbf{1}} & &\text{where} & c&= \sum_{i=1}^n \frac{d_i}{\sum_{j=1}^n d_j} z^{(0)}_i.
	\end{align*}
Note that $c$ is equal to the degree-weighted average opinion at time $0$.
\end{fact}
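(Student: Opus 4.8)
The plan is to recognize the DeGroot update matrix $\mathbf{W} \eqdef \mathbf{D}^{-1}\mathbf{A}$ as the row-stochastic transition matrix of a random walk on $G$ and to exploit its spectral structure. Unrolling the recurrence in Definition \ref{def:degroot} gives $\mathbf{z}^{(t)} = \mathbf{W}^t \mathbf{z}^{(0)}$, so the whole claim reduces to understanding $\lim_{t\to\infty}\mathbf{W}^t$. Since $d_i = \sum_{j\in\mathcal{N}(i)}A_{ij}$, each row of $\mathbf{W}$ sums to $1$, so $\vec{\mathbf{1}}$ is a right eigenvector of eigenvalue $1$, and one checks directly that the vector $\boldsymbol{\pi}$ with $\pi_i = d_i/\sum_{j} d_j$ is a left eigenvector of eigenvalue $1$ (the stationary distribution of the walk). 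The goal is then to show $\mathbf{W}^t \to \vec{\mathbf{1}}\boldsymbol{\pi}^\top$, since in that case $\mathbf{z}^* = \vec{\mathbf{1}}\boldsymbol{\pi}^\top \mathbf{z}^{(0)} = (\boldsymbol{\pi}^\top\mathbf{z}^{(0)})\,\vec{\mathbf{1}} = c\,\vec{\mathbf{1}}$ with $c = \sum_i \frac{d_i}{\sum_j d_j}z_i^{(0)}$, exactly as stated.

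The key technical step is a spectral analysis of $\mathbf{W}$. Although $\mathbf{W}$ itself is not symmetric, it is similar to the symmetric normalized adjacency matrix $\mathbf{N} \eqdef \mathbf{D}^{-1/2}\mathbf{A}\mathbf{D}^{-1/2}$ via $\mathbf{W} = \mathbf{D}^{-1/2}\mathbf{N}\mathbf{D}^{1/2}$, so I would transfer the problem to $\mathbf{N}$, which has real eigenvalues, an orthonormal eigenbasis, and spectrum contained in $[-1,1]$. The eigenvector of $\mathbf{N}$ for eigenvalue $1$ is $\mathbf{D}^{1/2}\vec{\mathbf{1}}$ (the image of $\vec{\mathbf{1}}$ under the similarity). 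Writing the spectral decomposition $\mathbf{N} = \sum_i \lambda_i \mathbf{u}_i\mathbf{u}_i^\top$ with $\lambda_1 = 1$, we have $\mathbf{N}^t = \sum_i \lambda_i^t \mathbf{u}_i\mathbf{u}_i^\top$, which converges to $\mathbf{u}_1\mathbf{u}_1^\top$ as soon as every other eigenvalue is strictly less than $1$ in absolute value. Pushing this limit back through the similarity transform gives $\mathbf{W}^t \to \mathbf{D}^{-1/2}\mathbf{u}_1\mathbf{u}_1^\top\mathbf{D}^{1/2}$, and a short computation with $\mathbf{u}_1 = \mathbf{D}^{1/2}\vec{\mathbf{1}}/\sqrt{\sum_j d_j}$ simplifies the right-hand side to the rank-one matrix $\vec{\mathbf{1}}\boldsymbol{\pi}^\top$.

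The main obstacle is justifying the two strict spectral-gap conditions, and this is precisely where the graph hypotheses enter. Simplicity of the eigenvalue $1$ (so that $\mathbf{u}_1\mathbf{u}_1^\top$ is the entire limit and the limiting opinion is a single consensus value rather than a block-constant vector) follows from \emph{connectedness}, via the Perron--Frobenius theorem applied to the nonnegative irreducible matrix $\mathbf{A}$ (equivalently $\mathbf{N}$). Ruling out the value $-1$ --- the only remaining eigenvalue of modulus $1$ --- is exactly where \emph{non-bipartiteness} is needed: a symmetric matrix of this form has $-1$ as an eigenvalue if and only if the graph is bipartite (equivalently, the walk is periodic), so excluding bipartite graphs forces $|\lambda_i| < 1$ for every $i \neq 1$ and hence $\lambda_i^t \to 0$. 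I would establish these two facts as a short lemma (or cite standard Perron--Frobenius and spectral graph theory), after which the convergence of $\mathbf{W}^t$ and the explicit form of $c$ follow immediately.
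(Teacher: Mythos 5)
Your proposal is correct. The paper itself gives no proof of Fact~\ref{fact:degroot_consensus} --- it is stated as a well-known property of the DeGroot dynamics --- so there is no in-paper argument to compare against line by line; but your spectral route is the standard one and is, in effect, the same machinery the paper deploys when proving Theorem~\ref{thm:second_eig_scaling}: there the authors write $\mathbf{z}^{(t)} = \mathbf{D}^{-1/2}\left(\mathbf{D}^{-1/2}\mathbf{A}\mathbf{D}^{-1/2}\right)^t\mathbf{D}^{1/2}\mathbf{z}^{(0)}$ and expand in the orthonormal eigenbasis of the symmetric normalized adjacency matrix, exactly your similarity-transform step, and your identification of $\mathbf{u}_1 = \mathbf{D}^{1/2}\vec{\mathbf{1}}/\|\mathbf{D}^{1/2}\vec{\mathbf{1}}\|_2$ matches their $\mathbf{v}_1'$. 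Your computation that the limit projector pushes back to $\vec{\mathbf{1}}\boldsymbol{\pi}^\top$, giving $c = \sum_i \frac{d_i}{\sum_j d_j} z_i^{(0)}$, is right, and you correctly locate where each hypothesis is used: connectedness gives simplicity of the eigenvalue $1$ via Perron--Frobenius, and non-bipartiteness excludes $-1$ from the spectrum so that all other eigenvalues satisfy $|\lambda_i| < 1$ and their contributions vanish. No gaps.
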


As discussed, a common approach to measuring  polarization on a single issue at time $t$ is to consider the overall opinion variance:
\begin{align*}
	\Var[\mathbf{z}^{(t)}] = \|\mathbf{z}^{(t)} - \mean(\mathbf{z}^{(t)})\cdot \vec{\mathbf{1}}\|_2^2
\end{align*}

Of course, if all opinions converge to the same constant value $c$, as guaranteed by Fact \ref{fact:degroot_consensus}, opinion variance eventually converges to zero. While this asymptotic observation only speaks to the model's behavior after a very long time, in the short term, variance also tends to decrease \emph{monotonically} with $t$, a fact that can be proven rigorously for regular graphs \cite{Dandekarpnas}.
%\footnote{See Appendix A in the full version of the paper for a proof.}.

%Much of this prior work focuses only on \underline{variance-based measures} of polarization, which by Fact \ref{fact:degroot_consensus}, always converge to zero under the DeGroot opinion dynamics, no matter how a social network is initialized or modified. 
\subsection{Group-based Polarization}
In this work, we study group-based polarization metrics, which we broadly define to include any function with three properties: invariant to a shift
in mean opinion, invariant to sign flips, and invariant to scaling. Formally:
\begin{definition}[Group-based Polarization]\label{def:groupbased}
	Let $f(G,\mathbf{z})$ be a function that maps an $n$-node graph $G$ and vector of opinions $\mathbf{z} \in \R^n$ to a measure of polarization. Then $f(G,\mathbf{z})$ is ``group-based'' if:
	\begin{enumerate}
		\item $f(G, \mathbf{z}) = f(-\mathbf{z})$,
		\item $f(G, \mathbf{z}) = f(\mathbf{z} + c\vec{\mathbf{1}})$ for any $\mathbf{z}$ and scalar $c$, and
		\item $f(G, \mathbf{z}) = f(c\mathbf{z})$ for any non-zero scalar $c$. 
	\end{enumerate}
\end{definition}
While variance and statistical measures studied in
Section \ref{sec:stat_measures} only depend on $\mathbf{z}$,
the local measures studied in Section \ref{sec:local_measures}
do depend on the underlying social network which is why we include
$G$ as a parameter to $f$.
Variance-based measures of polarization satisfy properties (1) and (2) of Definition \ref{def:groupbased}, but not (3). The last property reflects the fact that group structure 
should depend on \emph{relative} differences 
in opinions instead of absolute differences. 
For example, an opinion vector would be considered
polarized if we have two groups of individuals whose
mean opinions are further apart than opinions within each group, regardless
of absolute opinion difference.
The resulting requirement of scale-invariance has also appeared in axiomatic treatments of clustering objectives \cite{Kleinberg:2003}, which are closely related to group-based measures.

%In addition to implying the ideological alignment result of \cite{DeMarzoVayanosZwiebel:2003}, we show that this result also has implications for a variety of other group-based polarization measures. Specifically, we prove that other group-based measures converge to non-zero \emph{network dependent} values under the DeGroot model. As a result, depending on the structure of the underlying social network, these measures can sharply \emph{increase} even as global opinion variance decreases. 

%Recent studies have shown that political polarization has strikingly affected the U.S. in particular, in comparison to other western democracies (for example,\cite{Boxell2020_polarization_US_other_countries}). 
%However, the effects of political polarization have been documented worldwide; \cite{democracies_divided_polarization_book} cites examples of severe ethnic and religious polarization in Turkey, Kenya, and India, and even documents polarization in "relatively homogenous" societies in Bangladesh and Colombia.  

%	\item Selective exposure -- haven't read these: \cite{PrasetyaMurata:2020} \cite{CinelliBrugnoliSchmidt:2020}

% chktex-file 46

\section{Limit Analysis}\label{sec:limit_analysis}
While Fact \ref{fact:degroot_consensus}  implies that any variance-based measure of polarization converges to zero, this is not true for the group-based measures. Since they are both shift and scale invariant, they are insensitive to both the mean opinion (this is also true of variance-based measures) and to constant rescaling of the opinions.
As such, 
%The result of  \cite{DeMarzoVayanosZwiebel:2003} is based on analysis of the limiting behavior of the normalized vector containing each individuals divergence from the mean opinion:
%\begin{align*}
%	\frac{\mathbf{z}^{(t)} - \mean(\mathbf{z}^{(t)}) \cdot \vec{\mathbf{1}} }{\|\mathbf{z}^{(t)} - \mean(\mathbf{z}^{(t)}) \cdot \vec{\mathbf{1}}\|_2}. 
%\end{align*}
%Under mild assumptions, it can be shown that under the DeGroot mode, this vector converges to a fixed function of the second eigenvector of the normalized social network adjacency matrix, $\mathbf{D}^{-1}\mathbf{A}$
to analyze these measures, we prove a separate convergence result for the mean-centered, normalized opinion vector, which was also observed in \cite{DeMarzoVayanosZwiebel:2003}. In particular, we study the vector:
\begin{align*}
	\frac{\mathbf{z}^{(t)} - \mean(\mathbf{z}^{(t)}) \cdot \vec{\mathbf{1}} }{\|\mathbf{z}^{(t)} - \mean(\mathbf{z}^{(t)}) \cdot \vec{\mathbf{1}}\|_2}. 
\end{align*}
We show that, under mild conditions, in the DeGroot model this vector converges to a fixed function of the second eigenvector of the normalized social network adjacency matrix, $\mathbf{D}^{-1}\mathbf{A}$. We give a full proof below, which uses simpler arguments than \cite{DeMarzoVayanosZwiebel:2003}. 

\begin{theorem} \label{thm:second_eig_scaling}
Let $G$ be a connected graph with adjacency and degree matrices $\mathbf{A}$ and $\mathbf{D}$. Let $\mathbf{v}_1, \ldots, \mathbf{v}_n$ and $\lambda_1, \ldots, \lambda_n$ be the eigenvectors and eigenvalues of $\mathbf{D}^{-1}\mathbf{A}$, in order of magnitude. I.e.,  $|\lambda_1|\geq \ldots\geq |\lambda_n|$ . Let $\mathbf{z}^{(0)}, \ldots, \mathbf{z}^{(t)}$ be a sequence of opinion vectors updated via the DeGroot opinion dynamics as in Definition \ref{def:degroot}. Let $\mathbf{\bar{z}}^{(t)} = \mathbf{z}^{(t)} - \mean(\mathbf{z}^{(t)} )\cdot\vec{\mathbf{1}}$ be the mean-centered opinion vector at time $t$, and let $\mathbf{\bar{v}}_2 = \mathbf{v}_2 - \mean(\mathbf{v}_2)\cdot \vec{\mathbf{1}}$.
If $|\lambda_2| \neq |\lambda_3|$ and  $\langle \mathbf{D}^{1/2}\mathbf{v}_2, z^{(0)}\rangle \neq 0$ then:
\begin{align*}
	  \bar{\mathbf{s}}^* \eqdef \lim_{t\rightarrow \infty} \frac{[\mathbf{\bar{z}}^{(t)}]^{\pm}}{\|\mathbf{\bar{z}}^{(t)}\|_2} = \frac{[\mathbf{\bar{v}}_2]^{\pm}}{\|\mathbf{\bar{v}}_2\|_2}.
\end{align*}
Recall that for a non-zero vector $\mathbf{x}$, $[x]^{\pm}$ denotes $[x]^{\pm} = \sign(x_i)\cdot x$, where $x_i$ is the first non-zero entry in $\mathbf{x}$.
%$\langle 1, D^{1/2} z_0\rangle  = 0$
\end{theorem}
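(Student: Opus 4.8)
The plan is to symmetrize the (non-symmetric) update operator $\mathbf{D}^{-1}\mathbf{A}$, expand the opinion vector in the resulting eigenbasis, and show that mean-centering kills the consensus mode so that the second eigenvector governs the normalized limit. First I would record that $\mathbf{D}^{-1}\mathbf{A}$ is similar to the symmetric matrix $\mathbf{M} \eqdef \mathbf{D}^{-1/2}\mathbf{A}\mathbf{D}^{-1/2}$ through $\mathbf{D}^{-1}\mathbf{A} = \mathbf{D}^{-1/2}\mathbf{M}\mathbf{D}^{1/2}$. Thus $\mathbf{M}$ has real eigenvalues $\lambda_1,\dots,\lambda_n$ with an orthonormal eigenbasis $\mathbf{u}_1,\dots,\mathbf{u}_n$, and $\mathbf{v}_i=\mathbf{D}^{-1/2}\mathbf{u}_i$ are the matching eigenvectors of $\mathbf{D}^{-1}\mathbf{A}$; in particular $\{\mathbf{v}_i\}$ is orthogonal in the $\mathbf{D}$-inner product. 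Since $G$ is connected, Perron--Frobenius gives that $\lambda_1=1$ is simple with $\mathbf{v}_1=\vec{\mathbf{1}}$ (as $\mathbf{D}^{-1}\mathbf{A}\vec{\mathbf{1}}=\vec{\mathbf{1}}$) and $|\lambda_i|\le1$ for all $i$; simplicity forces $\lambda_2\neq1$, so $\mathbf{v}_2$ is not a multiple of $\vec{\mathbf{1}}$ and hence $\bar{\mathbf{v}}_2\neq\mathbf{0}$.

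Next I would expand the dynamics. Using $\mathbf{z}^{(t)}=(\mathbf{D}^{-1}\mathbf{A})^t\mathbf{z}^{(0)}=\mathbf{D}^{-1/2}\mathbf{M}^t\mathbf{D}^{1/2}\mathbf{z}^{(0)}$ and the eigenbasis gives
\begin{align*}
  \mathbf{z}^{(t)}=\sum_{i=1}^n c_i\lambda_i^t\mathbf{v}_i,
\end{align*}
where the non-degeneracy hypothesis $\langle\mathbf{D}^{1/2}\mathbf{v}_2,\mathbf{z}^{(0)}\rangle\neq0$ is the statement that $\mathbf{z}^{(0)}$ has a nonzero component along the $\mathbf{v}_2$ mode, ensuring $c_2\neq0$. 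The crucial simplification is that mean-centering is linear and sends each $\mathbf{v}_i$ to $\bar{\mathbf{v}}_i$, and since $\mathbf{v}_1=\vec{\mathbf{1}}$ we get $\bar{\mathbf{v}}_1=\mathbf{0}$. The consensus direction therefore drops out automatically:
\begin{align*}
  \bar{\mathbf{z}}^{(t)}=\sum_{i=2}^n c_i\lambda_i^t\bar{\mathbf{v}}_i.
\end{align*}
This is the step that replaces the consensus analysis of Fact \ref{fact:degroot_consensus}: rather than collapsing to a constant, the mean-centered vector lives in the span of the non-consensus modes.

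Then I would extract the dominant mode. Factoring out $\lambda_2^t$,
\begin{align*}
  \bar{\mathbf{z}}^{(t)}=\lambda_2^t\Big(c_2\bar{\mathbf{v}}_2+\sum_{i\ge3}c_i(\lambda_i/\lambda_2)^t\bar{\mathbf{v}}_i\Big),
\end{align*}
and the spectral gap $|\lambda_2|>|\lambda_3|$ makes every $|\lambda_i/\lambda_2|^t\to0$, so the bracketed vector converges to the fixed nonzero vector $c_2\bar{\mathbf{v}}_2$. Dividing by the norm then yields
\begin{align*}
  \frac{\bar{\mathbf{z}}^{(t)}}{\|\bar{\mathbf{z}}^{(t)}\|_2}=\sign(\lambda_2)^t\,\sign(c_2)\,\frac{\bar{\mathbf{v}}_2}{\|\bar{\mathbf{v}}_2\|_2}+o(1),
\end{align*}
i.e.\ the normalized mean-centered vector approaches $\pm\bar{\mathbf{v}}_2/\|\bar{\mathbf{v}}_2\|_2$ \emph{up to an overall sign}.

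The hard part is converting this convergence up to sign into a true limit, which is exactly the role of the sign-normalization $[\cdot]^{\pm}$. This also explains why the theorem needs no non-bipartiteness assumption: when $\lambda_2<0$ (e.g.\ $\lambda_2=-1$ for a bipartite graph) the factor $\sign(\lambda_2)^t$ oscillates and the raw normalized vector has no limit, but $[\cdot]^{\pm}$ quotients out the $\pm1$ ambiguity. Since $[\cdot]^{\pm}$ commutes with positive scaling, $[\bar{\mathbf{z}}^{(t)}]^{\pm}/\|\bar{\mathbf{z}}^{(t)}\|_2$ is the unit vector in the direction of $\bar{\mathbf{z}}^{(t)}$ with first nonzero entry positive, and likewise $[\bar{\mathbf{v}}_2]^{\pm}/\|\bar{\mathbf{v}}_2\|_2$ for $\bar{\mathbf{v}}_2$; as the direction of $\bar{\mathbf{z}}^{(t)}$ approaches $\mathrm{span}(\bar{\mathbf{v}}_2)$, these two sign-fixed unit vectors must coincide in the limit. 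The delicate point I would check carefully is the behavior of the leading coordinates: if $\bar{\mathbf{v}}_2$ has leading zeros, the first nonzero entry of $\bar{\mathbf{z}}^{(t)}$ can sit at an earlier, vanishing coordinate whose sign is dictated only by the $\lambda_3$-term, so I would argue at the coordinate realizing the first nonzero entry of $\bar{\mathbf{v}}_2$ --- where the $\bar{\mathbf{v}}_2$-term dominates and pins down the sign consistently --- or simply note that this leading entry is nonzero in the generic case.
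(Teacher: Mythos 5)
Your proposal is correct and follows essentially the same route as the paper: symmetrize via $\mathbf{D}^{-1/2}\mathbf{A}\mathbf{D}^{-1/2}$, expand $\mathbf{z}^{(0)}$ in the orthonormal eigenbasis, observe that mean-centering annihilates the consensus mode $\mathbf{v}_1 \propto \vec{\mathbf{1}}$, and use the gap $|\lambda_2| > |\lambda_3|$ to isolate the $\bar{\mathbf{v}}_2$ direction. Your closing remark about the sign normalization $[\cdot]^{\pm}$ when $\bar{\mathbf{v}}_2$ has leading zeros is in fact more careful than the paper's own proof, which passes over that edge case silently.
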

Theorem \ref{thm:second_eig_scaling} holds under two mild assumptions. First, we require that $\mathbf{z}^{(0)}$ has non-zero inner product with $\mathbf{D}^{1/2}\mathbf{v}_2$. This holds with probability 1 whenever $\mathbf{z}^{(0)}$ involves any non-zero isotropic random component. Second, we require that $|\lambda_2| \neq |\lambda_3|$, which will hold for any natural social network, as it can be guaranteed by assuming some randomness in the edges of the network\footnote{Informally, suppose we are given a fixed adversarial example network with $|\lambda_2| = |\lambda_3|$. A small random perturbation of the edges in the network will ensure that the second and third eigenvalue are no long \emph{exactly} equal with high probability.}. 

Under these conditions, Theorem \ref{thm:second_eig_scaling} shows that the normalized opinions converge to a vector $\bar{\mathbf{s}}^*$ that depends on the social graph $G$ (through its second eigenvector) but \emph{does not depend} on the initial opinion vector $\mathbf{z}^{(0)}$.

\begin{proof}
From the linear algebraic form of the DeGroot update rule, we have that:
\begin{align}
	\label{eq:power_method}
	\mathbf{z}^{(t)} = (\mathbf{D}^{-1}\mathbf{A})^t\mathbf{z}^{(0)} = \mathbf{D}^{-1/2} \left(\mathbf{D}^{-1/2}\mathbf{A}\mathbf{D}^{-1/2}\right)^t  \mathbf{D}^{1/2} \mathbf{z}^{(0)}.
\end{align}
Let $\mathbf{D}^{-1/2}\mathbf{A}\mathbf{D}^{-1/2} = \mathbf{V} \mathbf{\Sigma} \mathbf{V}^T$ denote the eigendecomposition of the symmetric normalized adjacency matrix $\mathbf{D}^{-1/2}\mathbf{A}\mathbf{D}^{-1/2}$. $\mathbf{\Sigma}$ is a diagonal matrix that contains real-valued eigenvalues identical to those of $\mathbf{D}^{-1}\mathbf{A}$. $\mathbf{V}$ is an orthogonal matrix whose columns contain eigenvectors $\mathbf{v}_1', \ldots, \mathbf{v}_n'$ where $\mathbf{v}_i' = \mathbf{D}^{1/2}\mathbf{v}_i/\|\mathbf{D}^{1/2}\mathbf{v}_i\|_2$. The eigenvalues of the normalized adjacency matrix of an undirected graph always lie in $[-1,1]$ and, since $\mathbf{A}$ is connected, there is exactly one eigenvector with eigenvalue $\lambda_1 = 1$.\footnote{There could be an eigenvalue with value $-1$, but we follow the convention that this would be denoted as $\lambda_2$.} It can be verified that the corresponding eigenvector of $\mathbf{D}^{-1/2}\mathbf{A}\mathbf{D}^{-1/2}$ is equal to $\mathbf{v}_1' = \mathbf{D}^{1/2}\vec{\mathbf{1}}/\| \mathbf{D}^{1/2}\vec{\mathbf{1}}\|_2$.
%Furthermore, since $A$ is connected, we have that $\lambda_2$ is strictly less than $1$, and since $A$ is non-bipartite, that $\lambda_n$ is strictly greater than $-1$. So, $D^{-1/2}AD^{-1/2}$ has a unique maximum magnitude eigenvalue. 

%
%\begin{align*}
%	\vec{\mathbf{1}}^T D^{-1/2} \left(D^{-1/2}AD^{-1/2}\right)^t  D^{1/2} z_0
%\end{align*}
%
%We can also check that $D^{-1}A \vec{\mathbf{1}} = \vec{\mathbf{1}}$, and thus $\vec{\mathbf{1}}^Tz^{(t)}=  \vec{\mathbf{1}}^T D^{-1}A \langle z^{(t)}, \vec{\mathbf{1}}\rangle$
%
%
%
%$\langle z^(t), \vec{\mathbf{1}}\rangle = \langle D^{-1/2} \left(D^{-1/2}AD^{-1/2}\right)^t  D^{1/2} z_0, \vec{\mathbf{1}}\rangle$

%\begin{align*}
%	D^{-1/2}AD^{-1/2} \cdot D^{1/2}\vec{\mathbf{1}} = D^{-1/2}A\vec{\mathbf{1}} = D^{-1/2}A\vec{\mathbf{1}} = D^{-1/2}D\vec{\mathbf{1}} = D^{1/2}\vec{\mathbf{1}}.
%\end{align*}

We expand \eqref{eq:power_method}, using that $(\mathbf{D}^{-1/2}\mathbf{A}\mathbf{D}^{-1/2})^t = \mathbf{V} \mathbf{\Sigma}^t \mathbf{V}^T$ since $\mathbf{V}$ is orthogonal. For $i = 1, \ldots, n$, let $c_i = \langle \mathbf{v}_i, \mathbf{D}^{1/2} \mathbf{z}^{(0)} \rangle$. We have that:
\begin{align*}
    \mathbf{z}^{(t)} = \mathbf{D}^{-1/2}\cdot \left(c_1\lambda_1^t\mathbf{v}_1 + c_2\lambda_2^t\mathbf{v}_2 + ...+c_n\lambda_n^t\mathbf{v}_n\right),
\end{align*}
and thus $\mathbf{\bar{z}}^{(t)} = \mathbf{{z}^{(t)}} - \mean\left(\mathbf{{z}}^{(t)}\right)$ equals:
 \begin{align*}
	\mathbf{\bar{z}}^{(t)} &= c_1\lambda_1^t\mathbf{D}^{-1/2}\mathbf{v}_1' - \mean\left(c_1\lambda_1^t\mathbf{D}^{-1/2}\mathbf{v}_1'\right)\cdot \vec{\mathbf{1}}\\ &+ c_2\lambda_2^t\mathbf{D}^{-1/2}\mathbf{v}_2' - \mean\left(c_2\lambda_2^t\mathbf{D}^{-1/2}\mathbf{v}_2'\right)\cdot \vec{\mathbf{1}} \\ &\hspace{7.2em}\vdots\\
	&+ c_n\lambda_n^t\mathbf{D}^{-1/2}\mathbf{v}_n' - \mean\left(c_n\lambda_n^t\mathbf{D}^{-1/2}\mathbf{v}_n'\right)\cdot \vec{\mathbf{1}}.
\end{align*}
Note that $\mathbf{D}^{-1/2}\mathbf{v}_1'$ is a scaling of the all ones vectors, so 
%\begin{align*}
%\mean(c_1\lambda_1^t\mathbf{D}^{-1/2}\mathbf{v}_1) \cdot \vec{\mathbf{1}} = c_1\lambda_1^t\mathbf{D}^{-1/2}\mathbf{v}_1.
%\end{align*}l
the first term in the sum above is zero. Letting $\mathbf{\bar{v}}_i = \mathbf{D}^{-1/2}\mathbf{v}_i' - \mean(\mathbf{D}^{-1/2}\mathbf{v}_i')\cdot \vec{\mathbf{1}}$, we are left with:
\begin{align*}
	\frac{\mathbf{\bar{z}}^{(t)}}{\|\mathbf{\bar{z}}^{(t)}\|_2} =  \frac{c_2\lambda_2^t\mathbf{\bar{v}}_2 +  c_3\lambda_3^t\mathbf{\bar{v}}_3 + \ldots + c_n\lambda_n^t\mathbf{\bar{v}}_n}{\|c_2\lambda_2^t\mathbf{\bar{v}}_2 + c_3\lambda_3^t\mathbf{\bar{v}}_3 + \ldots + c_n\lambda_n^t\mathbf{\bar{v}}_n\|_2}.
\end{align*}
We first note that $\|\mathbf{\bar{v}}_i\|_2 > 0$ for all $i = 2, \ldots, n$. To see why this is the case, observe that to have $\|\mathbf{\bar{v}}_i\|_2 = 0$, it must be that $\mathbf{v}_i' = c\mathbf{D}^{1/2}\vec{\mathbf{1}}$ for some constant $c$. However, this cannot be the case because $\mathbf{v}_i'$ is orthogonal to $\mathbf{v}_1' = c\mathbf{D}^{1/2}\vec{\mathbf{1}}$. Combined with our assumption that $c_2 = \langle \mathbf{D}^{1/2}\mathbf{v}_2, z^{(0)}\rangle \neq 0$, it follows that $\|c_2\mathbf{\bar{v}}_2\|_2 \geq 0$. Then, by our assumption that $|\lambda_2| \neq |\lambda_3$, we have $|\lambda_2| > |\lambda_i|$ for all $i=3, \ldots, n$. So, for any $\epsilon > 0$, there is some $t$ such that $\|c_3\lambda_3^t\mathbf{\bar{v}}_3 + \ldots + c_n\lambda_n^t\mathbf{\bar{v}}_n\|_2 \leq \epsilon \|c_2\mathbf{\bar{v}}_2\|_2$. We conclude that:
\begin{align*}
	\lim_{t\rightarrow \infty} \frac{[\mathbf{\bar{z}}^{(t)}]^{\pm}}{\|\mathbf{\bar{z}}^{(t)}\|_2}
	&=  \lim_{t\rightarrow \infty} \frac{[c_2\lambda_2^t\mathbf{\bar{v}}_2 + \ldots + c_n\lambda_n^t\mathbf{\bar{v}}_n]^{\pm}}{\|c_2\lambda_2^t\mathbf{\bar{v}}_2 + \ldots + c_n\lambda_n^t\mathbf{\bar{v}}_n\|_2} 
	=  \frac{[c_2\lambda_2^t\mathbf{\bar{v}}_2]^{\pm}}{\|c_2\lambda_2^t\mathbf{\bar{v}}_2\|_2}.
\end{align*}
This proves the theorem since $\frac{[c_2\lambda_2^t\mathbf{\bar{v}}_2]^{\pm}}{\|c_2\lambda_2^t\mathbf{\bar{v}}_2\|_2} = \frac{[\mathbf{\bar{v}}_2]^{\pm}}{\|\mathbf{\bar{v}}_2\|_2}$ for any $c_2, \lambda_2$.
\end{proof}

\subsection{Implications for Ideological Alignment}
In the work of DeMarzo, Vayanos, and Zwiebel \cite{DeMarzoVayanosZwiebel:2003},  Theorem \ref{thm:second_eig_scaling} is used to explain a phenomenon
involving \emph{multiple opinion vectors}, each 
defined for a different issue. They call the phenomenon ``unidimensional opinions'', but we prefer the terminology \emph{ideological
alignment}. 
Ideological alignment occurs when
large groups of individuals simultaneously 
differ in opinion on many issues
\cite{Evans:2003,AbramowitzSaunders:2008}. Also refereed to as ``party sorting'' \cite{FiorinaAbrams:2008}, this phenomenon is well-documented in the real-world, and there is strong survey-based evidence that it has increased in recent years \cite{levendusky2009partisan,Pew-Research-Center:2014}. 
Since it accentuates differences between groups, ideological alignment
has likely contributed to increased perception of polarization
\cite{AbramowitzSaunders:2008}.

{
	\setlength{\abovecaptionskip}{0pt}
\begin{figure}
	\centering
	\begin{subfigure}{.495\linewidth}
		\centering
		\caption{5-SBM}       
		\includegraphics[width=1\linewidth]{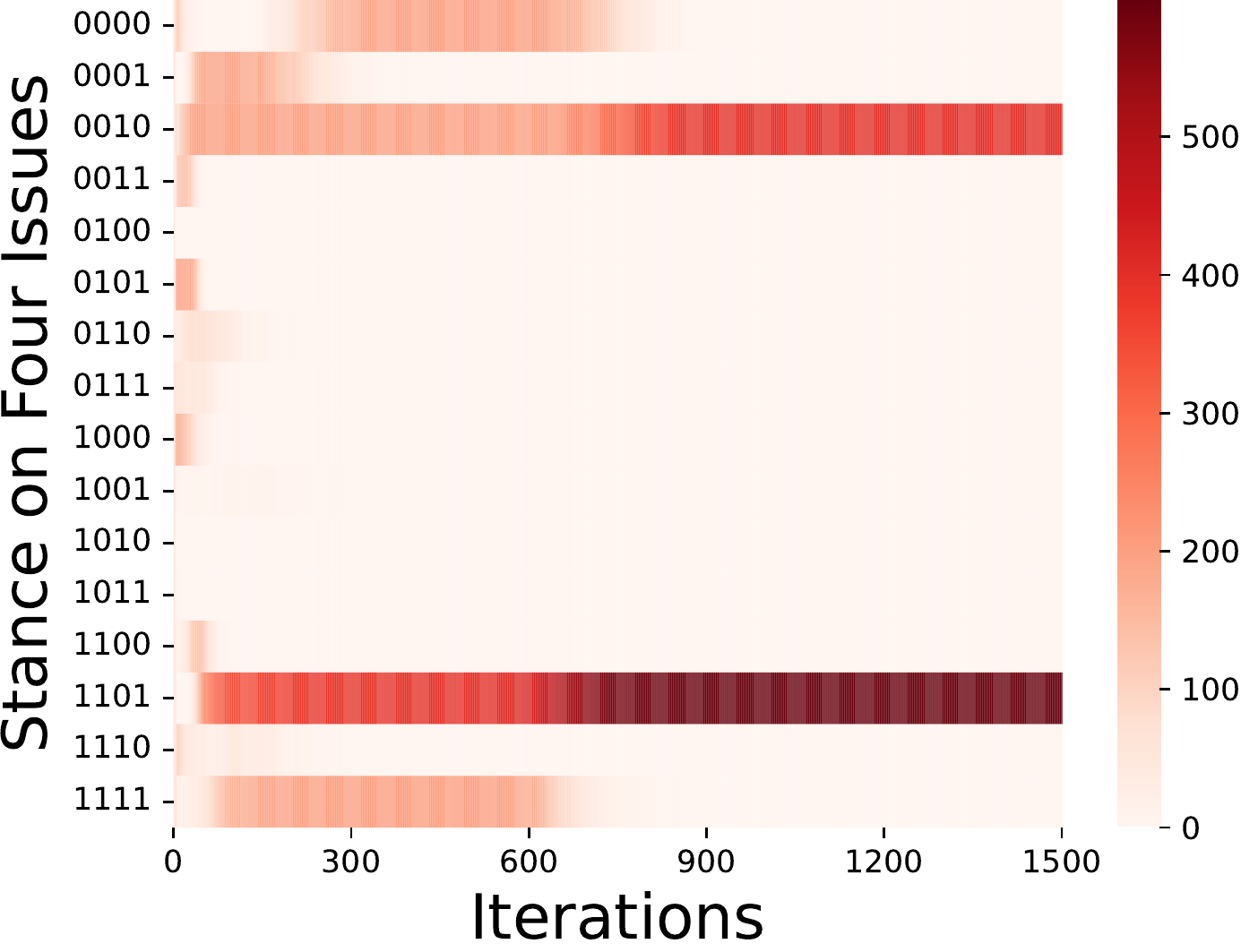}
		\label{fig:assortment-5sbm}
		\vspace{-1em}
	\end{subfigure}
	\begin{subfigure}{.495\linewidth}
		\centering
		\caption{Random Geometric}
		\includegraphics[width=1\linewidth]{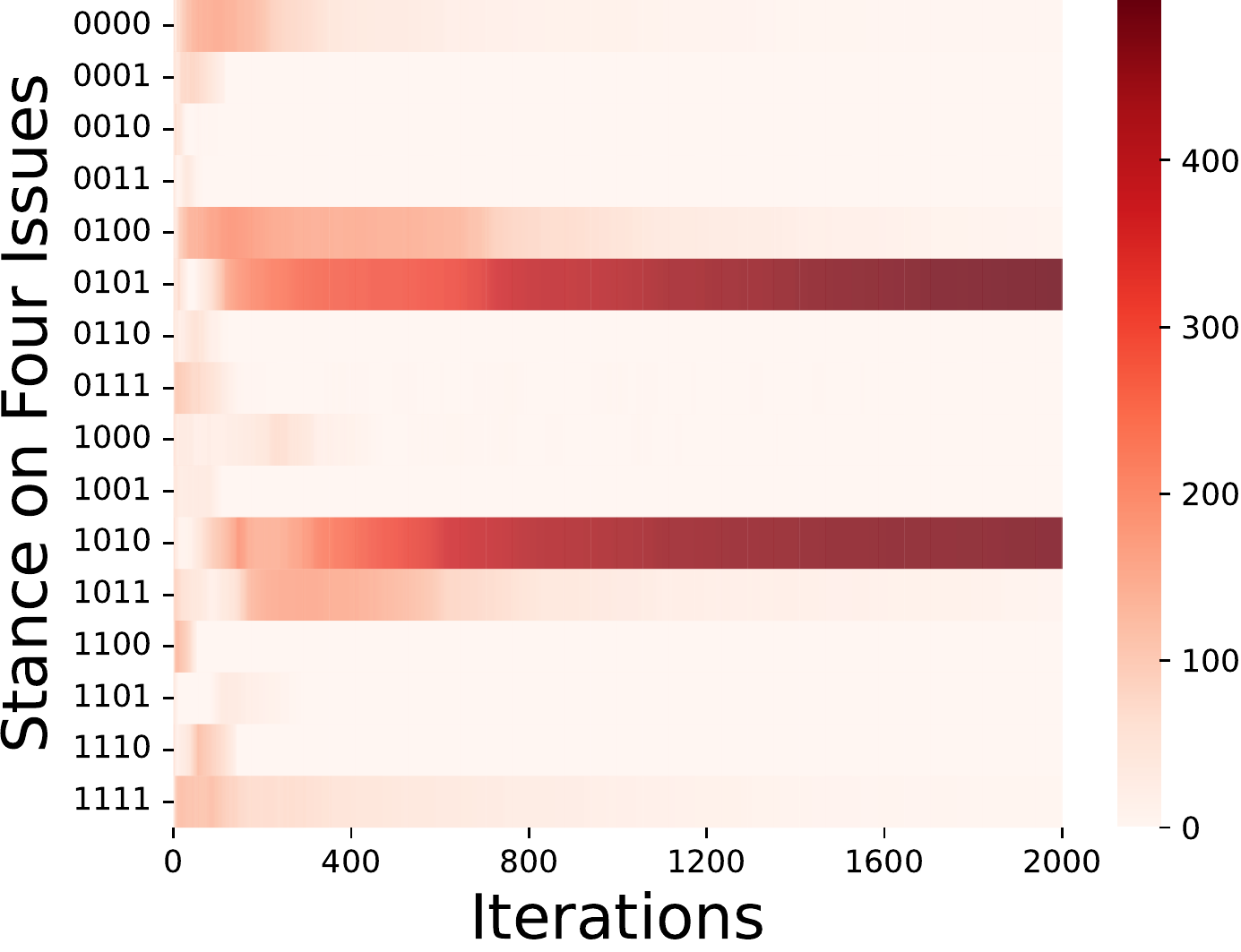}
		\label{fig:assortment-geometric}
		\vspace{-1em}
	\end{subfigure}          
	\begin{subfigure}{.495\linewidth}
		\caption{NYU Facebook}
		\centering
		\includegraphics[width=1\linewidth]{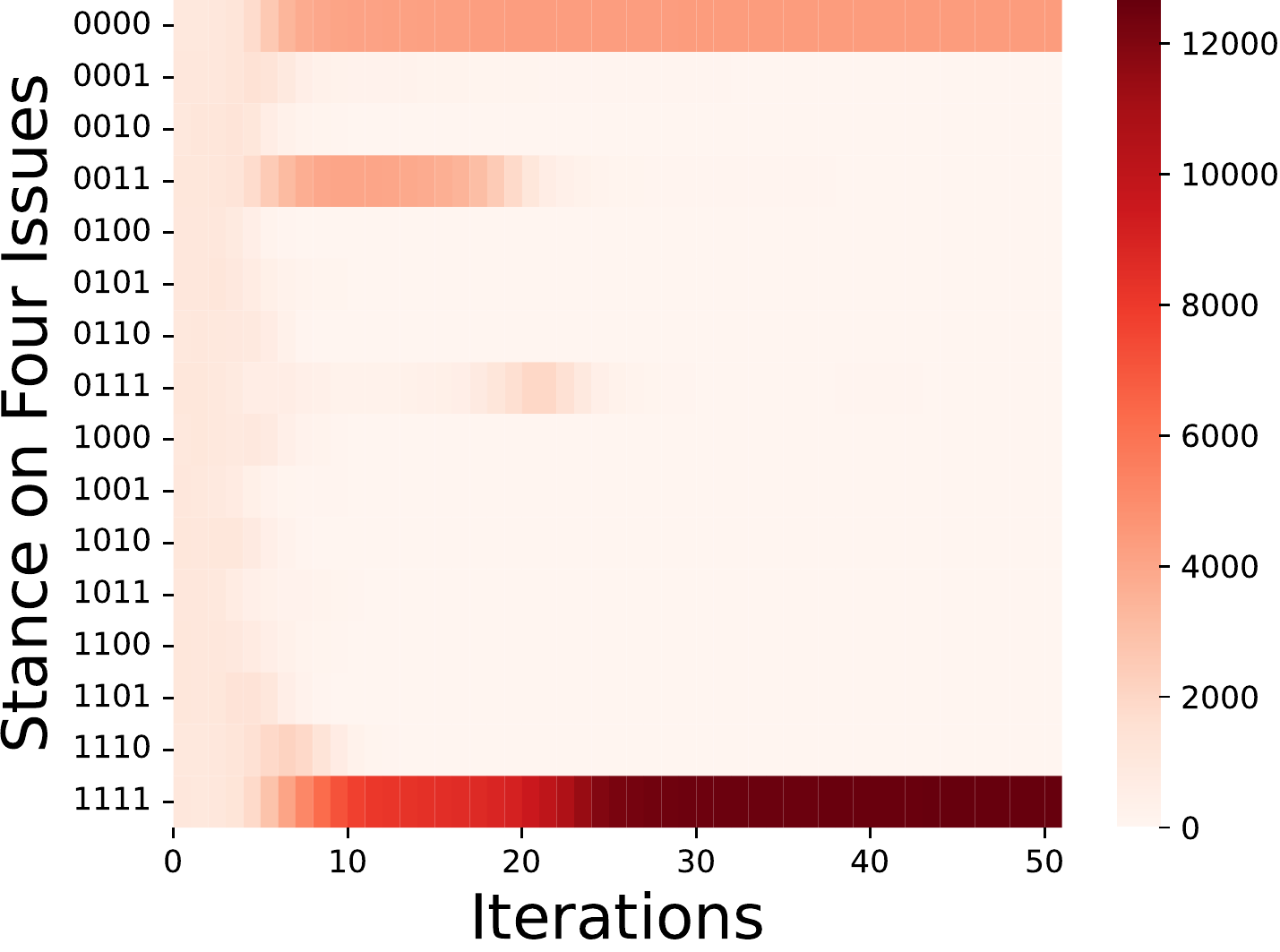}
		\label{fig:assortment-nyu}
	\end{subfigure}
	\begin{subfigure}{.495\linewidth}
		\centering
		\caption{Stanford Facebook}
		\includegraphics[width=1\linewidth]{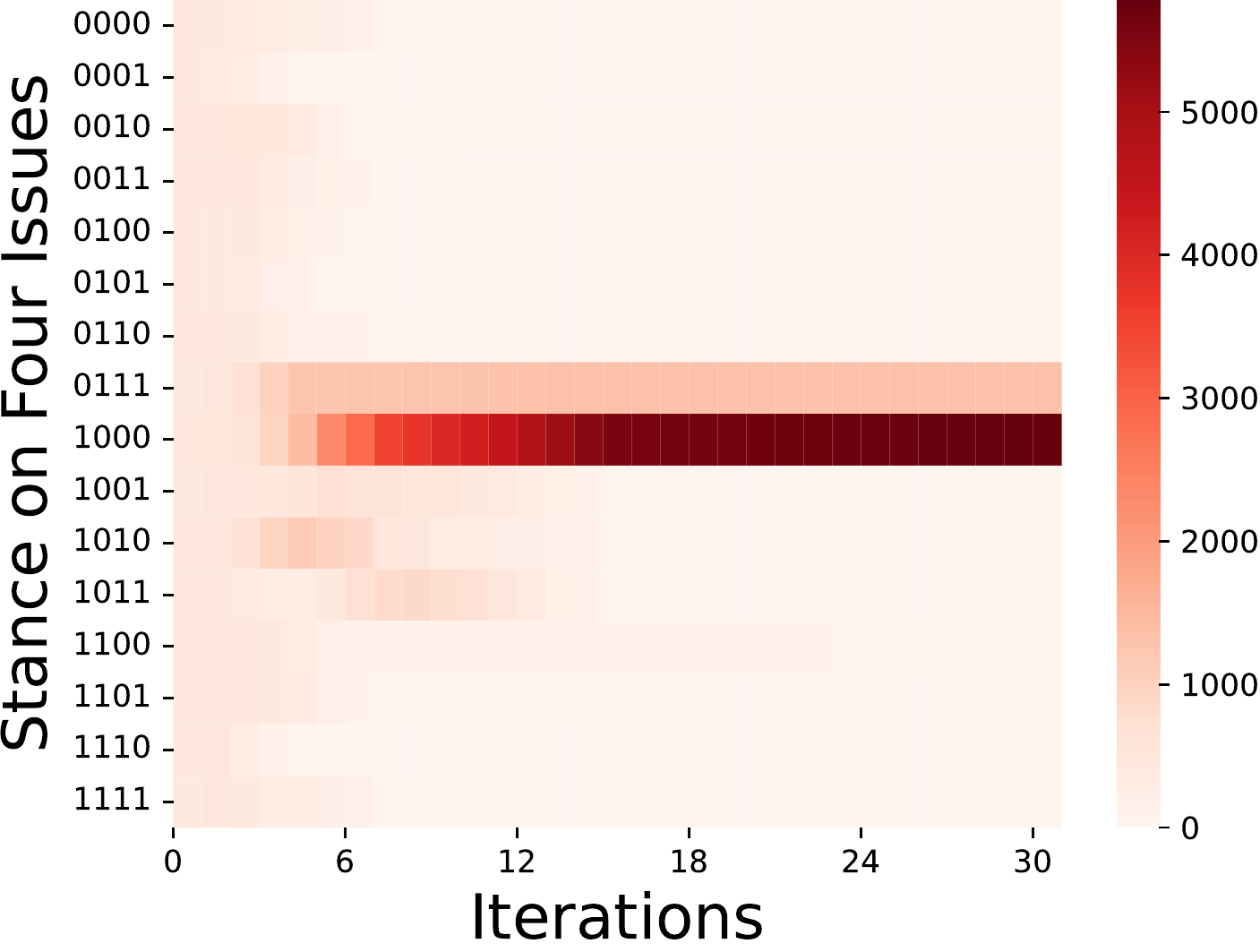}
		\label{fig:assortment-stanford}              
	\end{subfigure}   
	\caption{Heat map of ``binary opinion profiles'' across $m=4$ different issues
		by iteration of the DeGroot model. Each row corresponds to one of $2^m$ difference opinion profiles, and the color at time $t$ indicates the number of nodes whose current opinions match that profile. As predicted by Corollary \ref{corollary:alignment}, all individuals eventually sort into just two profiles, leading to perfect ideological alignment.
		%		The number of iterations until convergence	varies by network.
	}
	\label{fig:assortment}
\end{figure}
}
Theorem \ref{thm:second_eig_scaling} provides striking mathematical support for the emergence of ideological alignment. In particular, an an immediate corollary of the result is that, in the limit, individuals will perfectly sort into exactly two groups that simultaneously disagree on \emph{all issues} --i.e., for each issue, the members of one group will all have opinions on the opposite side of the mean as the other group.
We formalize their observation from \cite{DeMarzoVayanosZwiebel:2003} in
Corollary \ref{corollary:alignment}.

\begin{corollary}\label{corollary:alignment}
	Consider a social graph $G$ and $m$ different initial opinion vectors $z_1^{(0)}, \cdots, z_m^{(0)}$ satisfying the assumptions of Theorem \ref{thm:second_eig_scaling}.
	Apply the DeGroot opinion dynamics to each vector for
	$t$ steps to obtain opinions 
	$\mathbf{z}_1^{(t)}, \cdots, \mathbf{z}_m^{(t)}$, 
	and let 
	$\mathbf{s}_i^{(t)} = \sign(\mathbf{z}_i^{(t)} - 
	\mean(\mathbf{z}_i^{(t)} )\cdot \vec{\mathbf{1}})$. 
	Consider the matrix 
	$\mathbf{S}^{(t)} = [\mathbf{s}_1^{(t)}, \ldots, \mathbf{s}_m^{(t)}]$.
	In the limit as
	$t\rightarrow \infty$, $\mathbf{S}^{(t)}$
	will only contain {two unique rows}. 
\end{corollary}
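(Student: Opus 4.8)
The plan is to apply Theorem \ref{thm:second_eig_scaling} separately to each of the $m$ issues and to exploit the crucial feature that the limiting direction $\bar{\mathbf{s}}^*$ produced by that theorem depends only on the graph $G$ (through $\mathbf{\bar{v}}_2$) and \emph{not} on the initial opinion vector. Since every $z_i^{(0)}$ satisfies the hypotheses of the theorem, for each $i$ we obtain
\begin{align*}
	\lim_{t\to\infty} \frac{[\mathbf{\bar{z}}_i^{(t)}]^{\pm}}{\|\mathbf{\bar{z}}_i^{(t)}\|_2} = \frac{[\mathbf{\bar{v}}_2]^{\pm}}{\|\mathbf{\bar{v}}_2\|_2} = \bar{\mathbf{s}}^*,
\end{align*}
with the \emph{same} limit vector $\bar{\mathbf{s}}^*$ for all $i$.

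Next I would convert this convergence of directions into a statement about signs. Because $[\cdot]^{\pm}$ only multiplies a vector by a global $\pm 1$ and dividing by $\|\cdot\|_2$ is a positive scaling, neither operation alters the relative signs of the coordinates; hence $\mathbf{s}_i^{(t)} = \sign(\mathbf{\bar{z}}_i^{(t)}) = \pm\, \sign\!\big([\mathbf{\bar{z}}_i^{(t)}]^{\pm}\big)$. Set $\mathbf{s}^* = \sign(\bar{\mathbf{s}}^*)$. For each coordinate $j$ with $\bar{s}^*_j \neq 0$, convergence guarantees a time after which the $j$-th coordinate of the sign- and $\ell_2$-normalized vector shares the sign of $\bar{s}^*_j$; taking the maximum over the finitely many coordinates and the finitely many issues yields a single time $T$ after which $\sign\!\big([\mathbf{\bar{z}}_i^{(t)}]^{\pm}\big) = \mathbf{s}^*$ for every $i$, and therefore $\mathbf{s}_i^{(t)} \in \{+\mathbf{s}^*, -\mathbf{s}^*\}$.

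Finally I would read off the matrix structure. Writing $\mathbf{s}_i^{(t)} = b_i \mathbf{s}^*$ with $b_i \in \{+1,-1\}$ for $t \geq T$, the matrix becomes $\mathbf{S}^{(t)} = \mathbf{s}^* \mathbf{b}^\top$ where $\mathbf{b} = (b_1, \ldots, b_m)$. Its $j$-th row equals $s^*_j \mathbf{b}^\top$, which is $+\mathbf{b}^\top$ when $s^*_j = +1$ and $-\mathbf{b}^\top$ when $s^*_j = -1$; so $\mathbf{S}^{(t)}$ contains at most the two distinct rows $\pm \mathbf{b}^\top$, as claimed.

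The main obstacle is the degenerate possibility that $\bar{\mathbf{s}}^*$ (equivalently the mean-centered second eigenvector $\mathbf{\bar{v}}_2$) has a zero coordinate $j$: there the limiting sign is undetermined, and different issues may approach $0$ from opposite sides, governed by the sub-dominant eigenvector contributions whose coefficients $c_3, \ldots, c_n$ vary with $z_i^{(0)}$, potentially spawning extra row patterns. I would handle this using the same genericity already invoked for Theorem \ref{thm:second_eig_scaling}: for a natural, slightly randomized social network, $\mathbf{\bar{v}}_2$ has all nonzero entries with probability $1$, so the stabilization argument applies to every coordinate and the clean ``exactly two rows'' conclusion holds.
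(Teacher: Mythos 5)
Your proof is correct and follows essentially the same route as the paper, which presents this as an immediate consequence of Theorem \ref{thm:second_eig_scaling}: since the limiting normalized direction $\bar{\mathbf{s}}^*$ depends only on $G$ and not on the initial opinions, every issue's sign vector eventually equals $\pm\sign(\bar{\mathbf{s}}^*)$, forcing at most two distinct rows. You are in fact more careful than the paper, which does not address the degenerate case of zero entries in $\mathbf{\bar{v}}_2$ (where limiting signs are undetermined); your genericity argument for that edge case is the right fix.
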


Each row of $\mathbf{S}^{(t)}$ corresponds to a single node (individual) in $G$. It contains $\{+1,-1\}$ entries indicating if that individual has opinion below or above the mean for each of the $m$ topics at time $t$. The row can thus be viewed an individual's ``binary opinion profile''. The takeaway from Corollary \ref{corollary:alignment} is that, while there are $2^m$ possible opinion profiles, for large enough $t$ just two will dominate, becoming adopted by every individual. We visualized this alignment for four social networks in Figure \ref{fig:assortment}. Opinions were initialized randomly, so the rows of $\mathbf{S}^{(0)}$ are distributed evenly between all $2^m$ possible binary opinion profiles. However, as $t$ increases, we eventually see convergence to a state where  $\mathbf{S}^{(t)}$ has just two unique binary rows. The number of iterations until convergence varies by network.

While an interesting phenomenon, one potential limitation of ideological alignment as a polarization measure is that, like variance-based measures, it converges to the same extreme state for all social networks -- albeit to a state that is fully polarized instead of a fully in consensus.
In contrast, the other group-based measures of polarization discussed in this paper converge to \emph{network dependent quantities}, so their dynamics over time will natural differ within different social structure and can be impacted by outside influences that effect that structure, like social media or propaganda. 

% , the corollary shows that perfect ideological alignment eventually emerges for essentially \emph{any} network. So, while variance-based measures of polarization are essentially guaranteed to decrease in the DeGroot model, at least one class of group-based measures -- those based on ideological alignment -- is \emph{guaranteed to increase}.

\subsection{Implications for Group-Based Polarization}
The foundation of our work is the insight that
Theorem \ref{thm:second_eig_scaling} actually has implications on the limiting behavior of \emph{any} group-based
measure of polarization. Formally: 
% subsection general implications
% Novel insight into what this means for other group-based measures
% Realizing broader ranging implications
% We note that Theorem 1 has implications for other group-based
% measures
% With Theorem \ref{thm:second_eig_scaling}, we can state an immediate corollary on the limiting behavior of group-based polarization metrics:
\begin{corollary}\label{corollary:invariant}
	Let $f(G,\mathbf{z})$ be a group-based polarization metric according to Definition \ref{def:groupbased} that is continuous with respect to the argument $\mathbf{z}\in \R^n$. If the conditions of Theorem \ref{thm:second_eig_scaling} hold, 
	then 
	\begin{align*}
		\lim_{t \rightarrow \infty} f(G,\mathbf{z}^{(t)}) = f(G,\mathbf{v}_2) 
	\end{align*}
	where
	$z^{(t)}$ and $\mathbf{v}_2$ are as defined as in
	Theorem \ref{thm:second_eig_scaling}.
\end{corollary}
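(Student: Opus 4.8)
The plan is to reduce $f(G,\mathbf{z}^{(t)})$ to $f$ evaluated at the sign-normalized, mean-centered, unit-length iterate whose convergence is already established by Theorem \ref{thm:second_eig_scaling}, and then to pass the limit through $f$ using continuity. The three invariance properties of Definition \ref{def:groupbased} are precisely what is needed to strip away the shift, scale, and sign that separate $\mathbf{z}^{(t)}$ from the normalized vector $\bar{\mathbf{s}}^*$.

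First I would argue that for every $t$ large enough that $\bar{\mathbf{z}}^{(t)}$ is nonzero (which holds eventually, since Theorem \ref{thm:second_eig_scaling} shows the normalized iterate tends to a \emph{unit} vector), the invariances give a chain of equalities. Shift invariance (property 2) replaces $\mathbf{z}^{(t)}$ by its mean-centering $\bar{\mathbf{z}}^{(t)}$; scale invariance (property 3), applied with the nonzero scalar $\pm 1/\|\bar{\mathbf{z}}^{(t)}\|_2$, replaces $\bar{\mathbf{z}}^{(t)}$ by $[\bar{\mathbf{z}}^{(t)}]^{\pm}/\|\bar{\mathbf{z}}^{(t)}\|_2$. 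Here I would note that property 3 permits negative scalars, so it subsumes sign invariance and makes the choice of sign in $[\cdot]^{\pm}$ irrelevant. Hence $f(G,\mathbf{z}^{(t)}) = f\!\left(G,\, [\bar{\mathbf{z}}^{(t)}]^{\pm}/\|\bar{\mathbf{z}}^{(t)}\|_2\right)$ for all large $t$.

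Next I would invoke Theorem \ref{thm:second_eig_scaling}, under whose hypotheses $[\bar{\mathbf{z}}^{(t)}]^{\pm}/\|\bar{\mathbf{z}}^{(t)}\|_2 \to [\bar{\mathbf{v}}_2]^{\pm}/\|\bar{\mathbf{v}}_2\|_2$ (the normalization on the right is well-defined because that theorem's proof establishes $\|\bar{\mathbf{v}}_2\|_2 > 0$). Since $f$ is continuous in its vector argument, I can pass the limit inside to get $\lim_{t\to\infty} f(G,\mathbf{z}^{(t)}) = f\!\left(G,\, [\bar{\mathbf{v}}_2]^{\pm}/\|\bar{\mathbf{v}}_2\|_2\right)$. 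Finally I would run the invariance chain in reverse on the limit vector: scale invariance turns $[\bar{\mathbf{v}}_2]^{\pm}/\|\bar{\mathbf{v}}_2\|_2$ back into $\bar{\mathbf{v}}_2$, and shift invariance turns $\bar{\mathbf{v}}_2 = \mathbf{v}_2 - \mean(\mathbf{v}_2)\vec{\mathbf{1}}$ back into $\mathbf{v}_2$, yielding $f(G,\mathbf{v}_2)$ and completing the argument.

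The proof is essentially bookkeeping, so there is no deep obstacle; the point deserving the most care is well-definedness of every normalization. I must confirm that $\bar{\mathbf{z}}^{(t)}$ is eventually nonzero and that $\|\bar{\mathbf{v}}_2\|_2 > 0$ before dividing, both of which are already available from the statement and proof of Theorem \ref{thm:second_eig_scaling}. It is also worth emphasizing that continuity of $f$ is applied only to the convergent sequence of \emph{normalized} iterates, never to the raw opinion vectors nor to the (discontinuous) sign map $[\cdot]^{\pm}$ itself; the sign map's behavior is entirely absorbed into Theorem \ref{thm:second_eig_scaling}, which is exactly why the negative-scalar case of property 3 (equivalently, property 1) is the hypothesis that renders the limit insensitive to it.
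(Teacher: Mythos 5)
Your proof is correct and follows exactly the argument the paper intends (the paper states Corollary \ref{corollary:invariant} without an explicit proof, treating it as immediate): use shift and scale invariance from Definition \ref{def:groupbased} to rewrite $f(G,\mathbf{z}^{(t)})$ as $f$ of the normalized, mean-centered iterate, apply Theorem \ref{thm:second_eig_scaling} and continuity to pass to the limit, and undo the normalization on $\mathbf{v}_2$. Your attention to the well-definedness of the normalizations and to the fact that the negative-scalar case of property~3 absorbs the sign ambiguity is exactly the right bookkeeping.
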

Corollary \ref{corollary:invariant} implies that, unlike variance-based measures which always converge to zero, under the mild assumptions of Theorem \ref{thm:second_eig_scaling}, any group-based measure of polarization converges to a value that depends on the social graph $G$.
% (directly, and through its second eigenvector, $\mathbf{v}_2$). 
 At the same time, the value does not depend on the starting opinions $\mathbf{z}^{(0)}$. With Corollary \ref{corollary:invariant} in place, we analyze several different group-based measures of polarization in the subsequent sections.

\section{Statistical Measures}
\label{sec:stat_measures}

 We start with statistical measures that, like variance, consider only the numerical values in an opinion vector $\mathbf{z}$, without taking into account the ordering of entries or their structure with respect to $G$. For example, the following common statistical measure of bimodality incorporates $3^\text{rd}$ and  $4^\text{th}$ moment information from $\mathbf{z}$:

\begin{definition}[Sarle's Bimodality Coefficient]
    \label{def:bimodality}
    Consider an opinion vector $\mathbf{z}$ and let
    $\mathbf{\bar{z}}$ denote $\mathbf{\bar{z}} = \mathbf{z} - \mean(\mathbf{z})$.
    Then the bimodality $\beta(\mathbf{z})$ is written in terms of the skewness $\gamma$
    and kurtosis $\kappa$ as follows:
    \begin{align}
        \beta(\mathbf{z}) = \frac{\gamma^2+1}{\kappa}
        \textrm{ where }
        \gamma = \frac{\mean(\mathbf{\bar{z}}^3)}{\mean(\mathbf{\bar{z}}^2)^{3/2}}
        \textrm{ and }
        \kappa = \frac{\mean(\mathbf{\bar{z}}^4)}{\mean(\mathbf{\bar{z}}^2)^2}.
        \nonumber
    \end{align}
\end{definition}

The bimodality coefficient of Definition \ref{def:bimodality} has been used as a measure of opinion polarization, e.g. in \cite{Paul-DiMaggioBryson:1996}, where it was compared against variance-based measures. The measure lies between $0$ and $1$, with $1$ indicating maximum polarization. However, even a random isotropic vector $\mathbf{r}$ (e.g., a vector with i.i.d. random Gaussian entries) will have bimodality $\beta(\mathbf{r})\approx 1/3$, since the skewness of a normal random variable is 0 and the kurtosis is 3. Accordingly, we consider a vector of opinions ``polarized'' if the bimodality is larger than $1/3$. 

\begin{figure}
	\centering
	\includegraphics[width=.9\columnwidth]{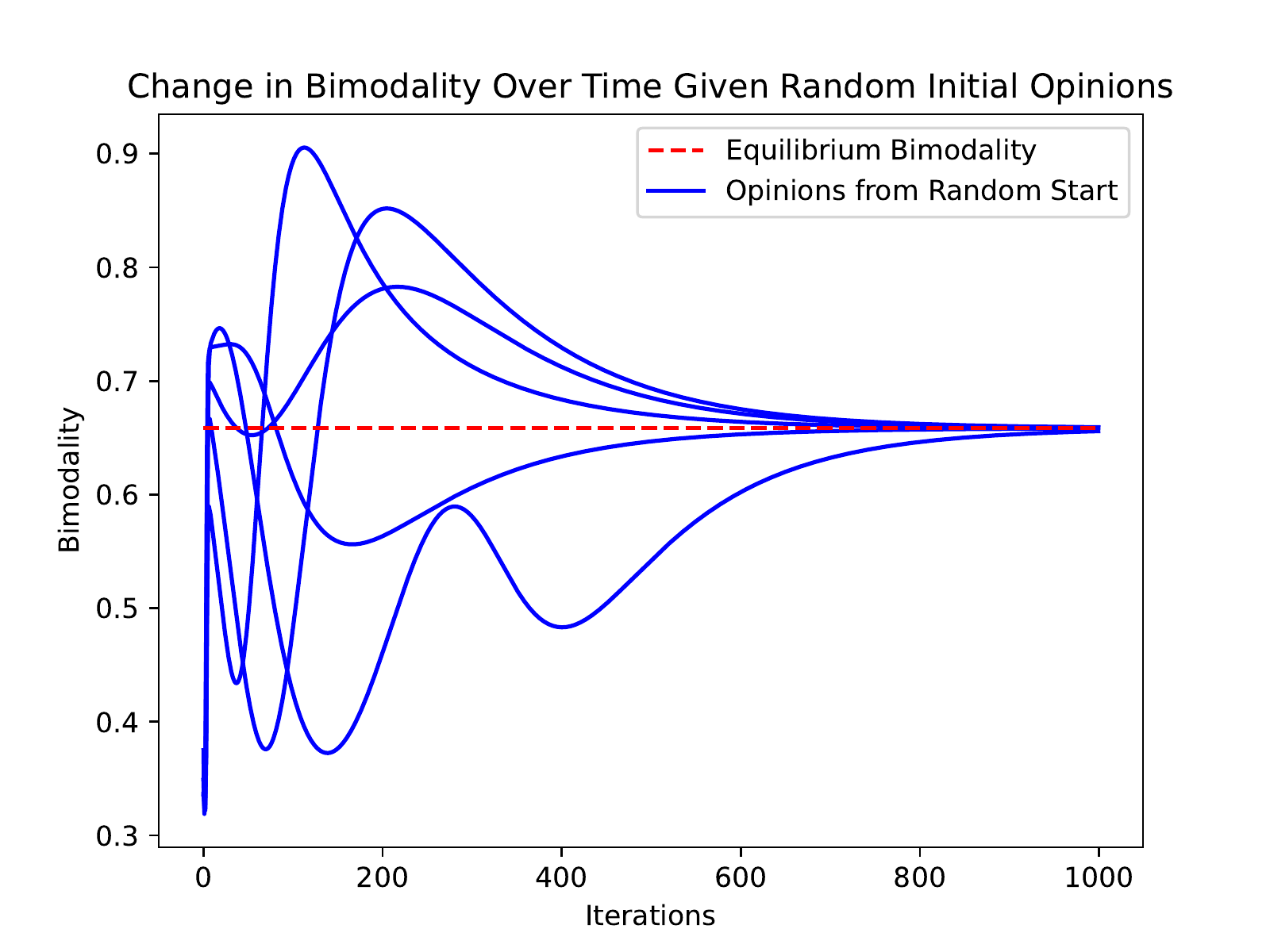}
	\caption{Opinion bimodality $\beta(\mathbf{z})$ plotted by iteration of DeGroot's
		opinion dynamics model run on the same 5 block SBM graph, and  initialized with five randomly generated starting
		opinion vectors. 
		As predicted by Corollary \ref{corollary:invariant}, in all cases opinion bimodality converges to a fixed non-zero
		equilibrium bimodality that depends on the graph.}
	\label{fig:bimodality-overtime}
\end{figure}

We demonstrate Corollary \ref{corollary:invariant} in Figure \ref{fig:bimodality-overtime}. We generate a Stochastic Block Model (SBM)
network \cite{HollandLaskeyLeinhardt:1983,Abbe:2017} on $n=1000$ nodes with five
communities (blocks). The probability
of an edge within a block is $p=1/10$ and the
probability of an edge between blocks is $q=1/100$.
We then initialize five random starting opinion vectors, each with i.i.d. standard normal entries.  We plot the bimodality of opinions as they evolve via the DeGroot dynamics.
By 1000 iterations, there is clear convergence to the bimodality
of the second eigenvector of the SBM, which, at $.658$, is much larger than the bimodalities of the starting opinions around $1/3$. So, while bimodality evolves in a highly non-monotonic way, it ultimately increases over time. 

\begin{table}[h!]
	\centering
	\begin{tabular}{|c c c|} 
		\hline
		$1^\text{st}$ Quartile & Median & $3^\text{rd}$ Quartile \\ 
		\hline
		.805 & .917 & .952 \\
		\hline
	\end{tabular}\vspace{.5em}
	\caption{Statistics of equilibrium bimodality for 100 college social networks from the Facebook100 data set \cite{traud2012social}.}
	
	\vspace{-1.5em}
	\label{table:facebook_bimod}
\end{table}

Increases in bimodality are even more pronounced in real-world social networks. We ran a similar experiment for 100 college social networks from the Facebook100 data set \cite{traud2012social} and observed that for all but five networks, bimodality \emph{increases} under the DeGroot dynamics with random starting opinions. The median and quartiles of the equilibrium bimodality (computed directly from the second eigenvector of each network) are included in Table \ref{table:facebook_bimod}. We conclude that the simple bimodality coefficient offers a clear contrast with variance-based measures of polarization that decrease over time. 

An informal analysis of SBM graphs offers theoretical support for increases in bimodality in natural social networks with a small number of well connected communities. Specifically, we argue that any SBM with a small number of blocks typically has equilibrium bimodality greater than $1/3$. We thus expect increasing bimodality under the DeGroot model if opinions are randomly initialized.

\begin{observation}\label{obs:sbm_bimodality}
	For a $k$-block SBM graph, the equilibrium bimodality is approximated by the {sample} bimodality of a normal random variable when $k$ samples are taken, which has expected value greater than $1/3$ for small $k$.
\end{observation}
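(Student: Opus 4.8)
The plan is to combine Corollary \ref{corollary:invariant} with a structural analysis of the second eigenvector of a symmetric $k$-block SBM. By Corollary \ref{corollary:invariant}, the equilibrium bimodality under the DeGroot dynamics equals $\beta(\mathbf{v}_2)$, the bimodality of the second eigenvector of $\mathbf{D}^{-1}\mathbf{A}$, so the whole argument reduces to understanding the distribution of entries of $\mathbf{v}_2$. First I would pass to the expected model: for an SBM with $k$ equal-sized blocks, common intra-block probability $p$, and common inter-block probability $q$, the expected normalized adjacency matrix has eigenvectors that are piecewise constant on the blocks. Aside from the all-ones top eigenvector, the ``community'' directions form a $(k-1)$-dimensional eigenspace sharing a single second-largest eigenvalue, so $\mathbf{v}_2$ is (approximately) constant on each block, taking some value $\mu_b$ on block $b$.

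Next I would reduce $\beta(\mathbf{v}_2)$ to a statistic of just the $k$ block values $\mu_1, \ldots, \mu_k$. Because all blocks have the same size $n/k$, the empirical distribution of the entries of $\mathbf{v}_2$ is exactly the uniform distribution on $\{\mu_1, \ldots, \mu_k\}$, so every central moment appearing in Definition \ref{def:bimodality} coincides with the corresponding sample moment of the $\mu_b$; hence $\beta(\mathbf{v}_2) = \beta(\mu_1, \ldots, \mu_k)$. To identify the law of the $\mu_b$, I would use the permutation symmetry among blocks: once the random edges of the SBM break the $(k-1)$-fold degeneracy, the selected direction $\mathbf{v}_2$ is, to leading order, isotropic within the community eigenspace, so $(\mu_1, \ldots, \mu_k)$ behaves like a uniformly random direction in the hyperplane of $\mathbb{R}^k$ orthogonal to $\vec{\mathbf{1}}$. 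Since $\beta$ is shift- and scale-invariant (Definition \ref{def:groupbased}), this direction may be replaced by a mean-centered i.i.d.\ Gaussian vector without changing $\beta$; consequently $\beta(\mathbf{v}_2)$ is distributed as the sample bimodality $\hat\beta_k$ of $k$ standard normal draws, which is precisely the claim.

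Finally I would show $\E[\hat\beta_k] > 1/3$ for small $k$. Here I would invoke the classical finite-sample moment identities for Gaussian data, namely $\E[\hat\kappa] = 3(k-1)/(k+1)$ and $\E[\hat\gamma^2] = 6(k-2)/((k+1)(k+3))$. The sample kurtosis is biased strictly below its asymptotic value $3$, while the squared skewness contributes a strictly positive term, and both effects inflate $\hat\beta_k = (\hat\gamma^2+1)/\hat\kappa$ above the isotropic baseline of $1/3$, which is recovered as $k\to\infty$ when $\hat\gamma \to 0$ and $\hat\kappa \to 3$. Evaluating the resulting expression for small $k$ (for example, $k=5$ yields a value near $0.69$) shows it comfortably exceeds $1/3$ and tracks the empirically observed equilibrium bimodality of $0.658$ for the five-block SBM.

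I expect the main obstacle to be the two approximation steps that force this to be an informal Observation rather than a theorem: justifying that perturbing the SBM yields an approximately isotropic $\mathbf{v}_2$ within the degenerate community eigenspace (a random-matrix concentration claim), and controlling the expectation of the ratio $(\hat\gamma^2+1)/\hat\kappa$, since $\E[(\hat\gamma^2+1)/\hat\kappa] \neq (\E[\hat\gamma^2]+1)/\E[\hat\kappa]$ in general. A fully rigorous version would require concentration bounds on $\hat\gamma$ and $\hat\kappa$ together with control on the deviation of $\mathbf{v}_2$ from block-constancy, but the heuristic above already explains why a small number of well-connected communities drives the equilibrium bimodality above the $1/3$ polarization threshold.
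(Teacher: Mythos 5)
Your proposal follows essentially the same route as the paper's own sketch: reduce to $\beta(\mathbf{v}_2)$ via Corollary \ref{corollary:invariant}, argue that the second eigenvector of a $k$-block SBM is approximately an isotropic random combination of the block indicator vectors (so its entries look like $k$ Gaussian draws each repeated $n/k$ times), and then use the finite-sample bias of the Gaussian moment estimators to push the value above $1/3$. Your only additions are welcome ones --- the explicit sample-skewness formula $\E[\hat\gamma^2]=6(k-2)/((k+1)(k+3))$ and the caveat that the expectation of the ratio $(\hat\gamma^2+1)/\hat\kappa$ is not the ratio of expectations --- both of which the paper's heuristic glosses over.
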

We sketch a proof of Observation \ref{obs:sbm_bimodality}:
While the bimodality of the normal distribution is $1/3$, the empirical bimodality computed from a finite number of samples tends to over-estimate the true bimodality. While it is difficult to obtain an exact expression for the expectation of the sample bimodality, the sample kurtosis has expectation $3\frac{k-1}{k+1}$ \cite{JoanesGill:1998}. Sample kurtosis is thus an underestimate for small $k$, explaining the overestimate of bimodality, which depends on the inverse kurtosis. 
Now consider the expected symmetric normalized adjacency matrix $\mathbf{\bar{D}}^{-1/2}\mathbf{\bar{A}}\mathbf{\bar{D}}^{-1/2}$ of an SBM graph, where $\mathbf{\bar{D}} = \E[\mathbf{D}]$ and $\mathbf{\bar{A}} = \E[\mathbf{A}]$. 
%Since $\mathbf{\bar{D}}$ is a scaling of the identity, $\mathbf{\bar{D}}^{-1/2}\mathbf{\bar{A}}\mathbf{\bar{D}}^{-1/2}$ has the same eigenvectors as $\mathbf{\bar{D}}^{-1}\mathbf{\bar{A}}$. 
It is not hard to see that the top $k$ eigenvectors of $\mathbf{\bar{D}}^{-1/2}\mathbf{\bar{A}}\mathbf{\bar{D}}^{-1/2}$ can be spanned by $\vec{\mathbf{1}}$ as well as $k$ block indicator vectors, each which is $1$ for the nodes in a single community, and $0$ for all other nodes. 
Since the actually normalized adjacency matrix $\mathbf{{D}}^{-1/2}\mathbf{{A}}\mathbf{{D}}^{-1/2}$ can be viewed as a perturbed version of  $\mathbf{\bar{D}}^{-1/2}\mathbf{\bar{A}}\mathbf{\bar{D}}^{-1/2}$, we roughly expect its first $k$ eigenvectors to also be spanned by $\vec{\mathbf{1}}$ and the $k$ block vectors -- a formal statement could be made by appealing to the Davis-Kahan perturbation theorem \cite{DavisKahan:1970}. Moreover, the $2^\text{nd}$ through the $(k-1)^\text{st}$ eigenvalues of $\mathbf{\bar{D}}^{-1/2}\mathbf{\bar{A}}\mathbf{\bar{D}}^{-1/2}$ are all the same, so we roughly expect the second eigenvector of $\mathbf{{D}}^{-1/2}\mathbf{{A}}\mathbf{{D}}^{-1/2}$ to be a \emph{random} linear combination of the $k$ block indicator vector, plus some scaling of $\vec{\mathbf{1}}$ (which has no impact on bimodality). If the random linear combination is isotropic, the second eigenvector will look exactly like $k$ samples from a random Gaussian distribution, each repeated $n/k$ times. This vector the same bimodality as $k$ random Gaussian samples.\footnote{Formally, we made an arguement about the second eigenvector of $\mathbf{{D}}^{-1/2}\mathbf{{A}}\mathbf{{D}}^{-1/2}$, whereas according to Corollary \ref{corollary:invariant}, it is the second eigenvector of $\mathbf{D}^{-1}\mathbf{A}$ that controls the equaibrium bimodality of a social network. However, since $\mathbf{{D}}$ is close to a scaling of the identity for an SBM graph, these two vectors will be very similar to each other.} 

Observation \ref{obs:sbm_bimodality} is visualized in Figure \ref{fig:bimodality-byk}, which was generated by computing the equilibrium opinion bimodality for 100 random $k$-SBM graphs with $1000$ and $2000$ nodes. While it approaches $1/3$ as $k$ increases, equilibrium bimodality is much larger for small $k$. We also plot the sample bimodality of $k$ i.i.d Gaussian samples (also computed using 100 trials), which as predicted by Observation \ref{obs:sbm_bimodality}, correlates well with the observed bimodality of the $k$-SBM.  

\begin{figure}
	\centering
	\includegraphics[width=.9\columnwidth]{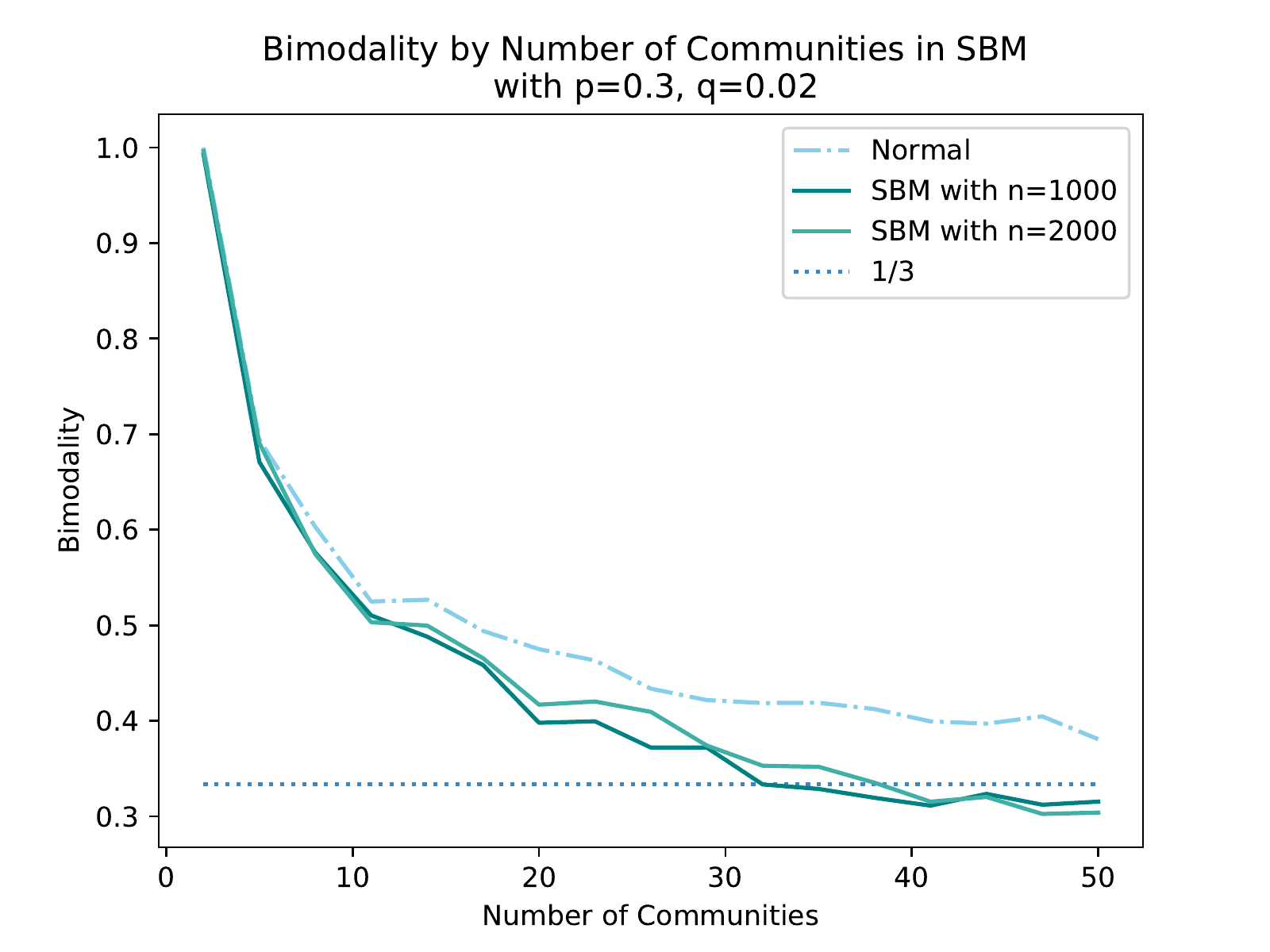}
	\vspace{-1em}
	\caption{Average equilibrium bimodality of $k$-SBM graphs with intra-block edge
		probability $3/10$ and inter-block edge probability $2/100$.
		Bimodality converges to the bimodality
		of a random normal variable for large $k$, which is $1/3$, but as predicted in Observation \ref{obs:sbm_bimodality}, can be much larger for small $k$.}
	\label{fig:bimodality-byk}
\end{figure}

%Finally, note that since $\mathbf{{D}}$ is close to a scaling of the identity for an SBM graph, the second eigenvector of $\mathbf{{D}}^{-1/2}\mathbf{{A}}\mathbf{{D}}^{-1/2}$ closely approximates that of $\mathbf{D}^{-1}\mathbf{A}$, which according to Corollary \ref{corollary:invariant}, controls the equaibrium bimodality of the social network. 

\section{Local Measures}
\label{sec:local_measures}
Another interesting class of group-based polarization measures are those that take into account local structure of the social graph $G$. Such measures are motivated by the fact that individuals are most heavily exposed to the opinions of their social connections -- i.e., their neighborhood in $G$. Individuals likely also have a sense of the overall mean opinion in $G$ (e.g., from the news), but do not simultaneously sense all opinions in a social network. 

In this section we introduce and study one such measure, which we call \emph{average local agreement} that takes these considerations into account. In particular, we define the local agreement of a vertex $i$ to be the ratio of $i$'s neighbors whose opinion falls on the same side (above or below) the mean opinion $\mean(\mathbf{z})$ as $i$.
We posit that \emph{high local agreement} correlates with \emph{high perceptions of polarization}, as individuals who feel more isolated in a group, away from those differing opinion, tends to experience feelings of polarization \cite{LevenduskyMalhotra:2015}. 

We formally define average local agreement below. We use $\sign(\mathbf{x})$ to denote the operation that rounds every entry of a vector $\mathbf{x}$ to +1 or -1, taking the convention that if $x_i = 0$, $[\sign(\mathbf{x})]_i = +1$. 
\begin{definition}[Average Local Agreement] \label{def:local_agreement}
	Let $G$ be a social network on $n$ nodes and let $\mathbf{z}\in \R^n$ be an opinion vector.  Let $\mathbf{s}= \sign(\mathbf{z} - \mean(\mathbf{z}) \cdot \vec{\mathbf{1}})$. The average local agreement $\mathcal{L}(G,\mathbf{z})$ equals:
	\begin{align*}
		\mathcal{L}(G,\mathbf{z}) &= \frac{1}{n}\sum_{i=1}^n \frac{1}{d_i}\sum_{j\in\mathcal{N}(i)} \mathbbm{1}[s_i = s_j] & \\ &\text{where \,\,} \mathbf{s} = \sign(\mathbf{z} - \mean(\mathbf{z}) \cdot \vec{\mathbf{1}}).
	\end{align*}
Recall that $\mathcal{N}(i)$ denotes the neighborhood of node $i$ in $G$, and $\mathbbm{1}[\cdot]$ is an indicator function that evaluates to $1$ if the expression in brackets is true, and to $0$ otherwise. 
\end{definition}

Like bimodality, average local agreement is a group-based measure as specified in Definition \ref{def:groupbased}. So, as in Corollary \ref{corollary:invariant}, we have that in the DeGroot dynamics, under the assumptions of Theorem \ref{thm:second_eig_scaling}, $\lim_{t\rightarrow \infty} \mathcal{L}(G,\mathbf{z}^{(t)}) =  \mathcal{L}(G,\mathbf{v}_2)$,
% \begin{align}
% 	\label{eq:local_converge}
% 	\lim_{t\rightarrow \infty} \mathcal{L}(G,\mathbf{z}^{(t)}) =  \mathcal{L}(G,\mathbf{v}_2),
% \end{align}
where $z^{(t)}$ and $\mathbf{v}_2$ are as defined as in
the theorem.
Average local agreement is bounded between $[0,1]$ and we expect a value of $1/2$ for randomly initialized opinions. So, any value above $1/2$ is considered ``polarized''. 
As shown in Table \ref{table:facebook_local}, we observe very high average local agreement in the limit for real-world social networks. For all but two of the 100 networks in the Facebook100 data set, this measure of polarization converged to a value above $.6$, and was typically well above $.9$. In Figure \ref{fig:local} we also visualize local agreement over time for a random 5-SBM graph and a random geometric graph, as well as the Swarthmore Facebook graph (chosen for its small size). In all cases, ``bubbles'' of high local agreement visibly emerge, with average local agreement increasing to $.785$, $.954$, and $.941$ for the three graphs, respectively. 

\begin{table}[h!]
	\centering
	\begin{tabular}{|c c c|} 
		\hline
		$1^\text{st}$ Quartile & Median & $3^\text{rd}$ Quartile \\ 
		\hline
		.904 & .947 & .960 \\
		\hline
	\end{tabular}\vspace{.5em}
	\caption{Statistics of the equilibrium average local agreement, $\mathcal{L}(G,\mathbf{v}_2)$ for the Facebook100 data set \cite{traud2012social}.}
	
	\vspace{-1.5em}
	\label{table:facebook_local}
\end{table}

\begin{figure}
	\setlength\tabcolsep{0pt}
	\settowidth\rotheadsize{\footnotesize Swarthmore}
	\begin{tabularx}{\linewidth}{l XXX|X }
		& \thead{0 Iterations} & \thead{5 Iterations} & \thead{Equilibrium} & \thead{Equilibrium\\ Opinions}  \\
		\rothead{\centering\footnotesize 5 Block SBM}        
		&   \includegraphics[width=\hsize,valign=m]{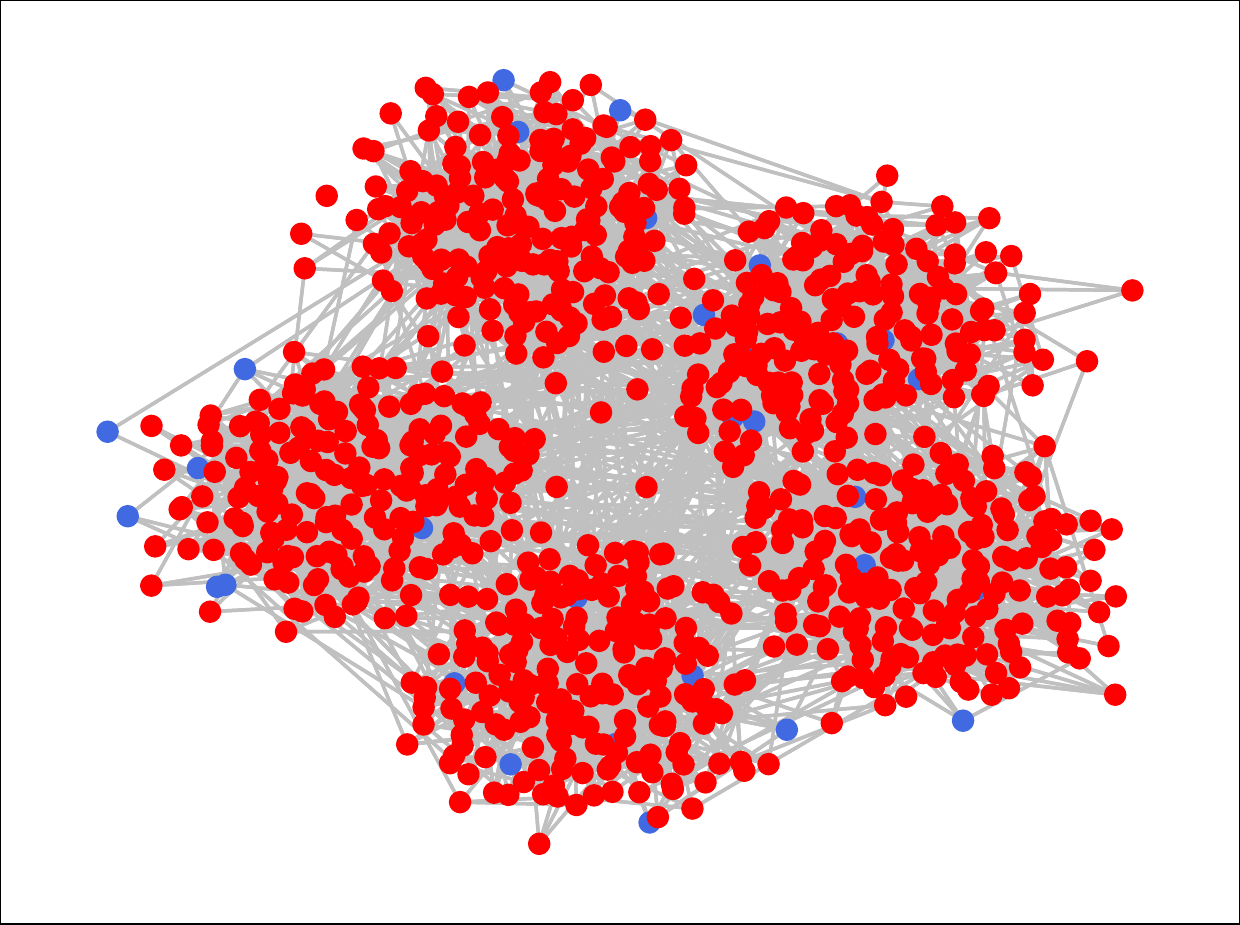}
		&   \includegraphics[width=\hsize,valign=m]{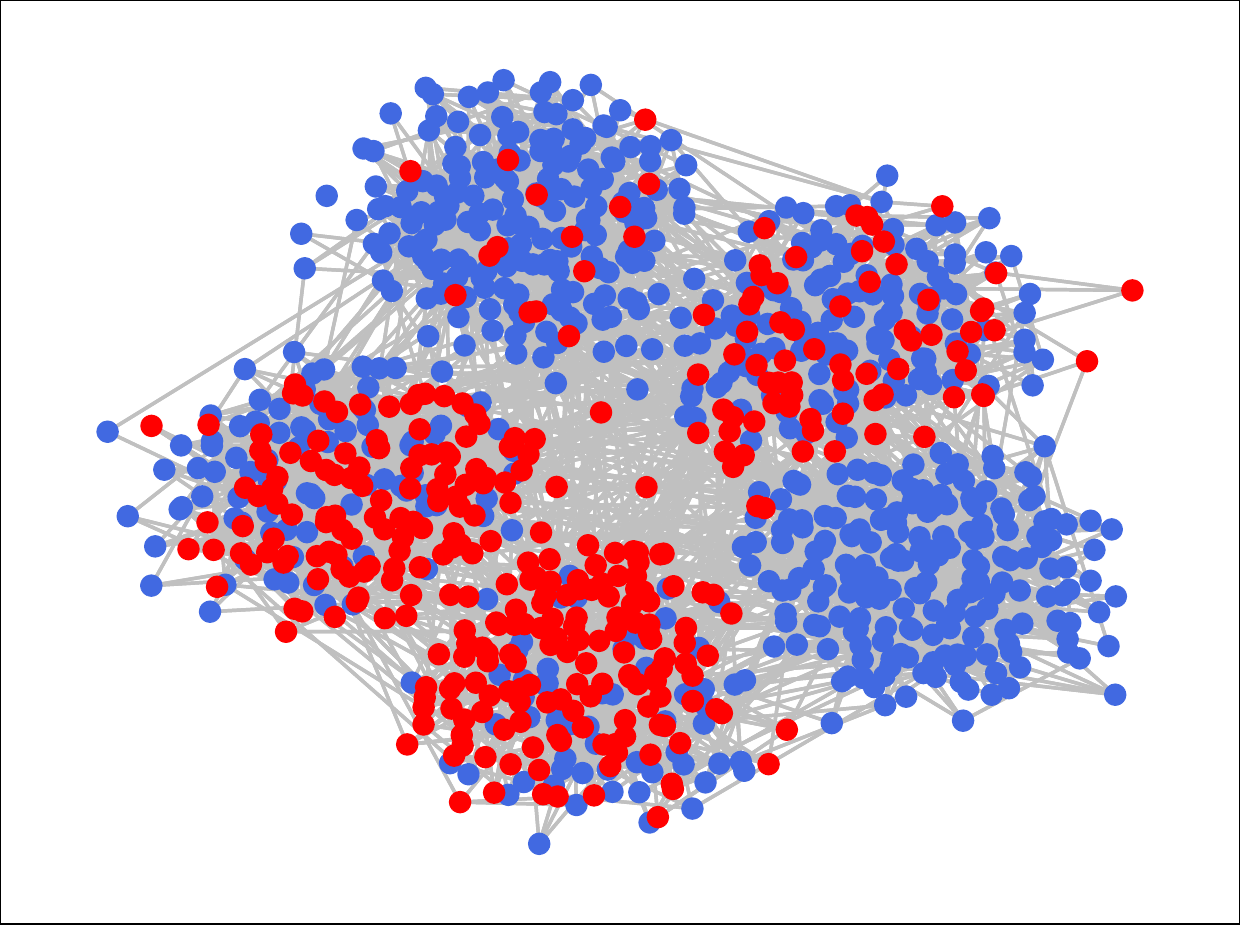}
		&   \includegraphics[width=\hsize,valign=m]{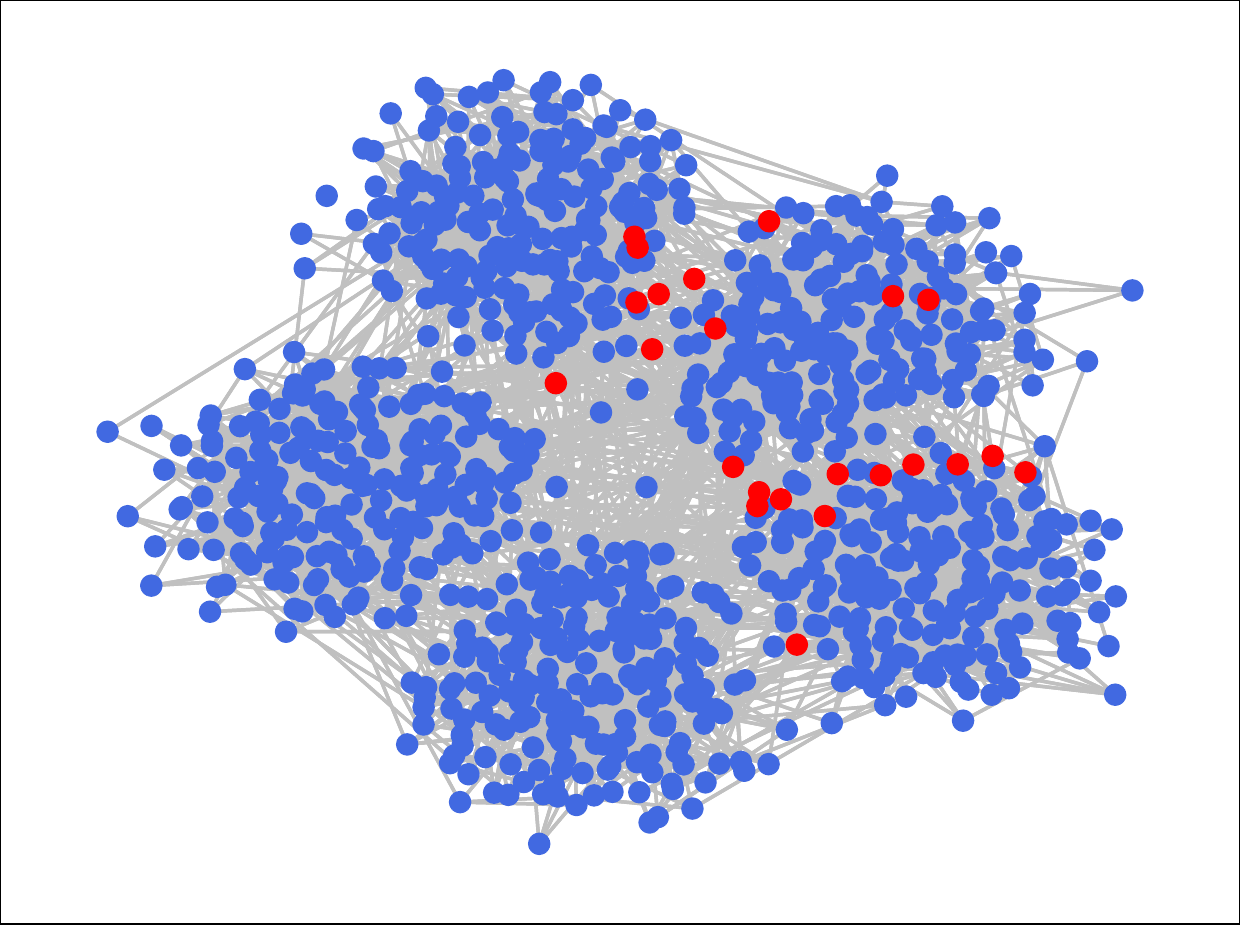}       
		&   \includegraphics[width=\hsize,valign=m]{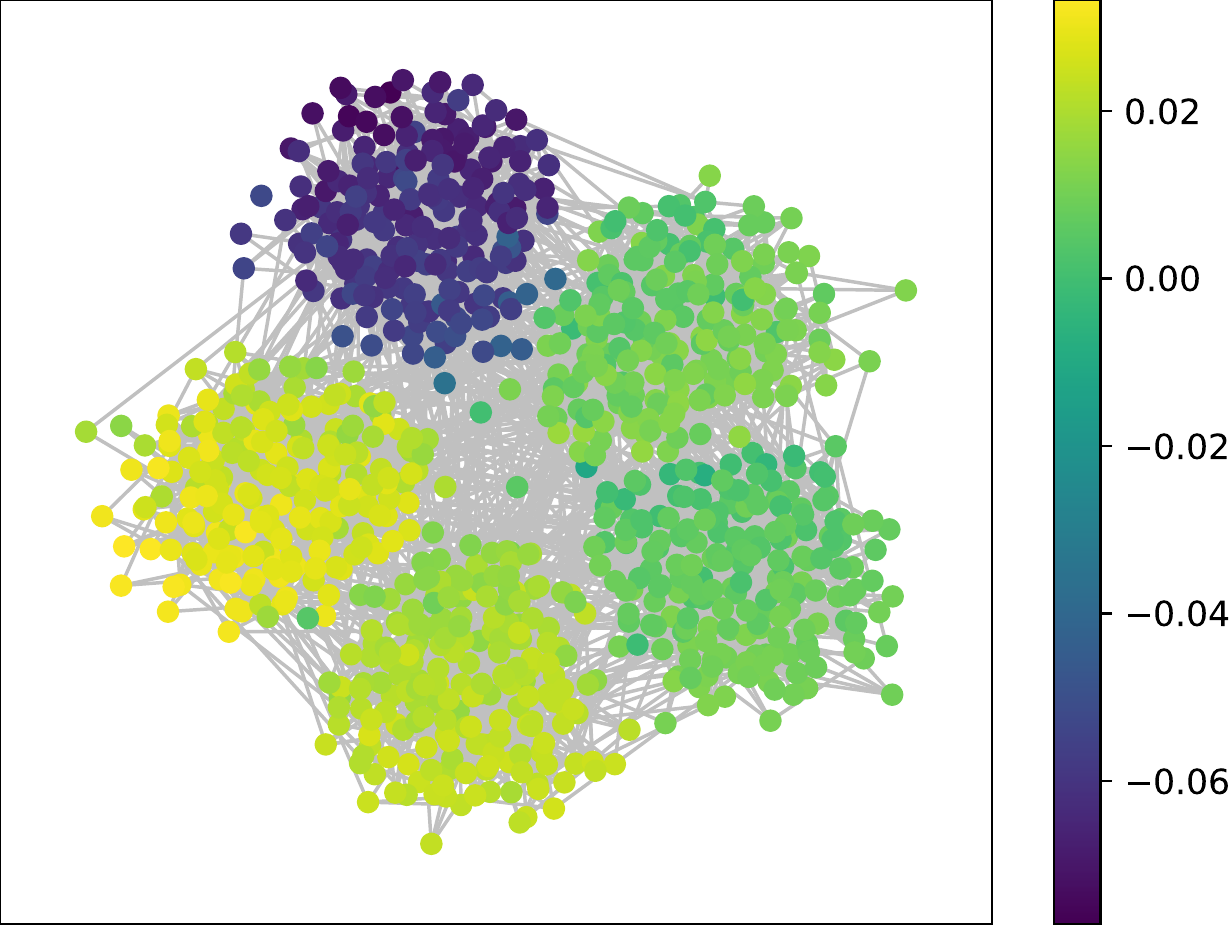}     \\  \addlinespace[0pt]
		\rothead{\centering\footnotesize Random \\Geometric} 
		&   \includegraphics[width=\hsize,valign=m]{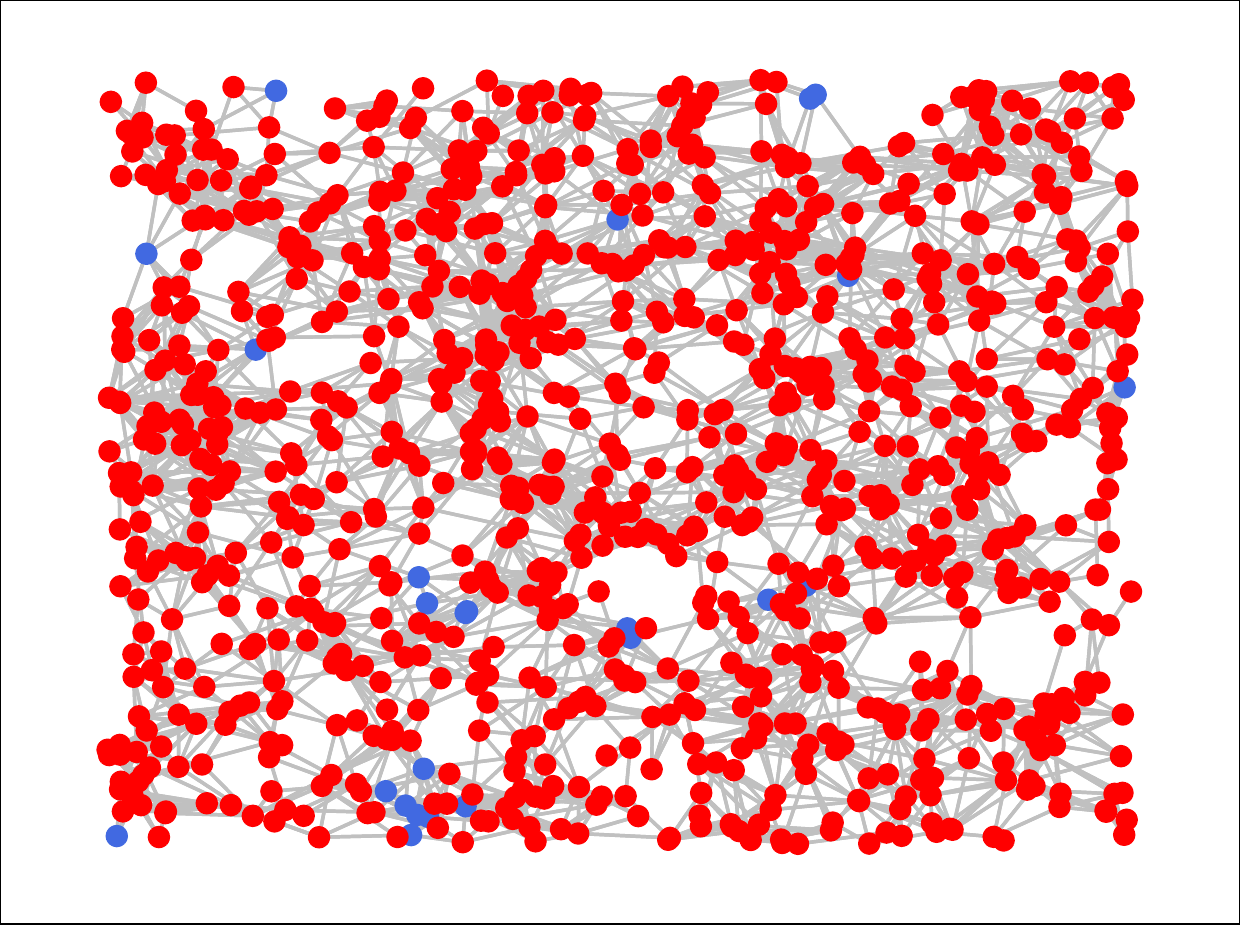}
		&   \includegraphics[width=\hsize,valign=m]{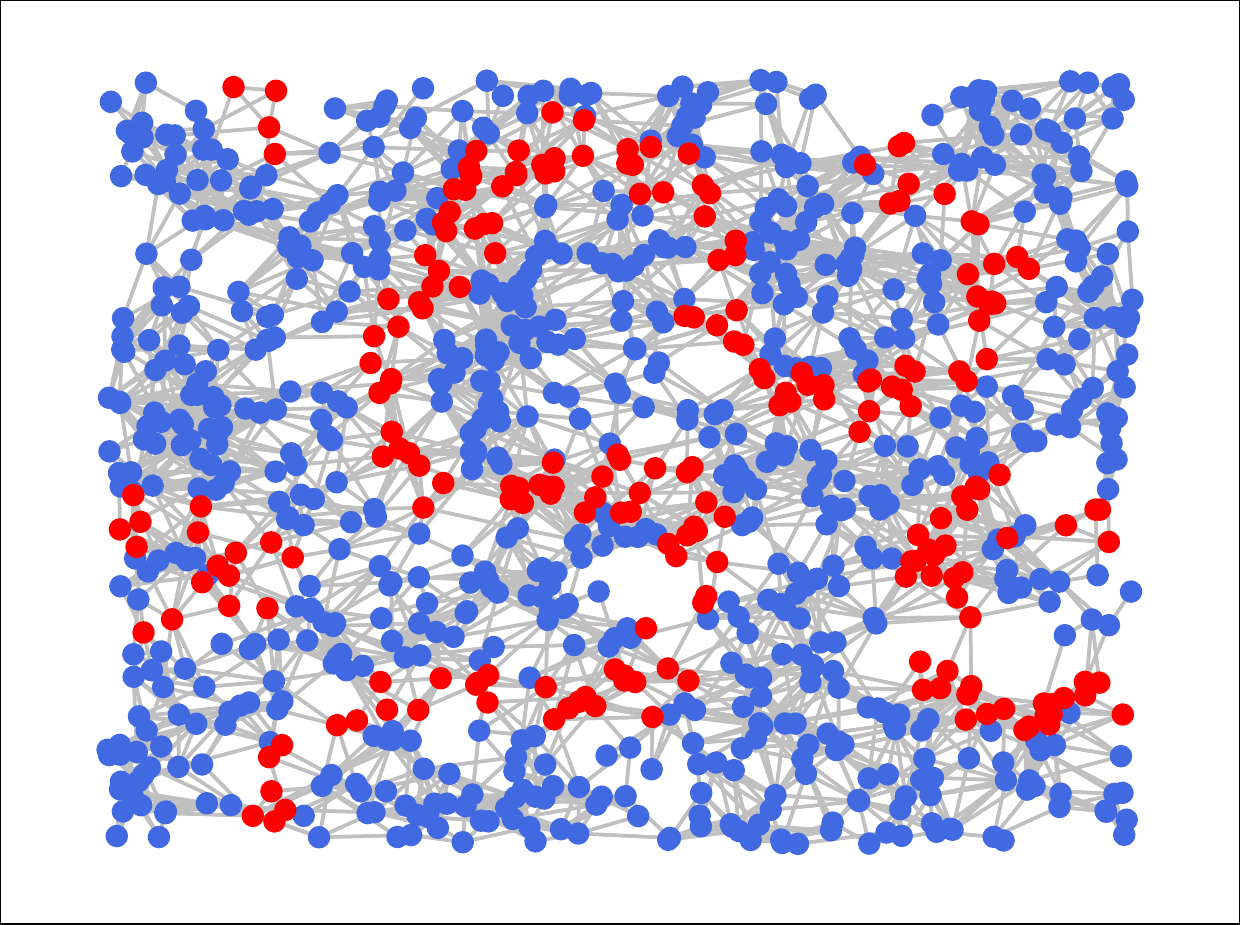}
		&   \includegraphics[width=\hsize,valign=m]{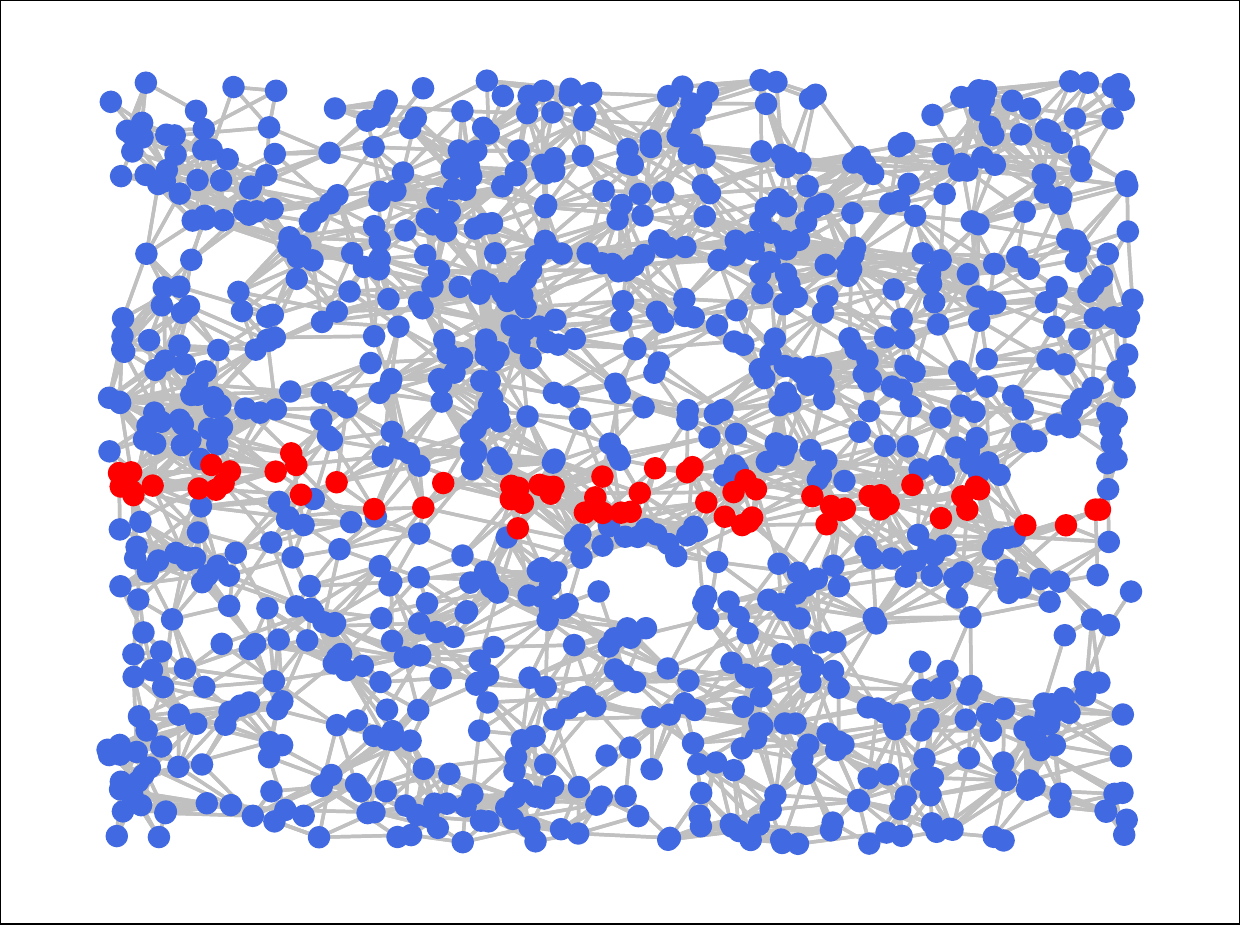}       
		&   \includegraphics[width=\hsize,valign=m]{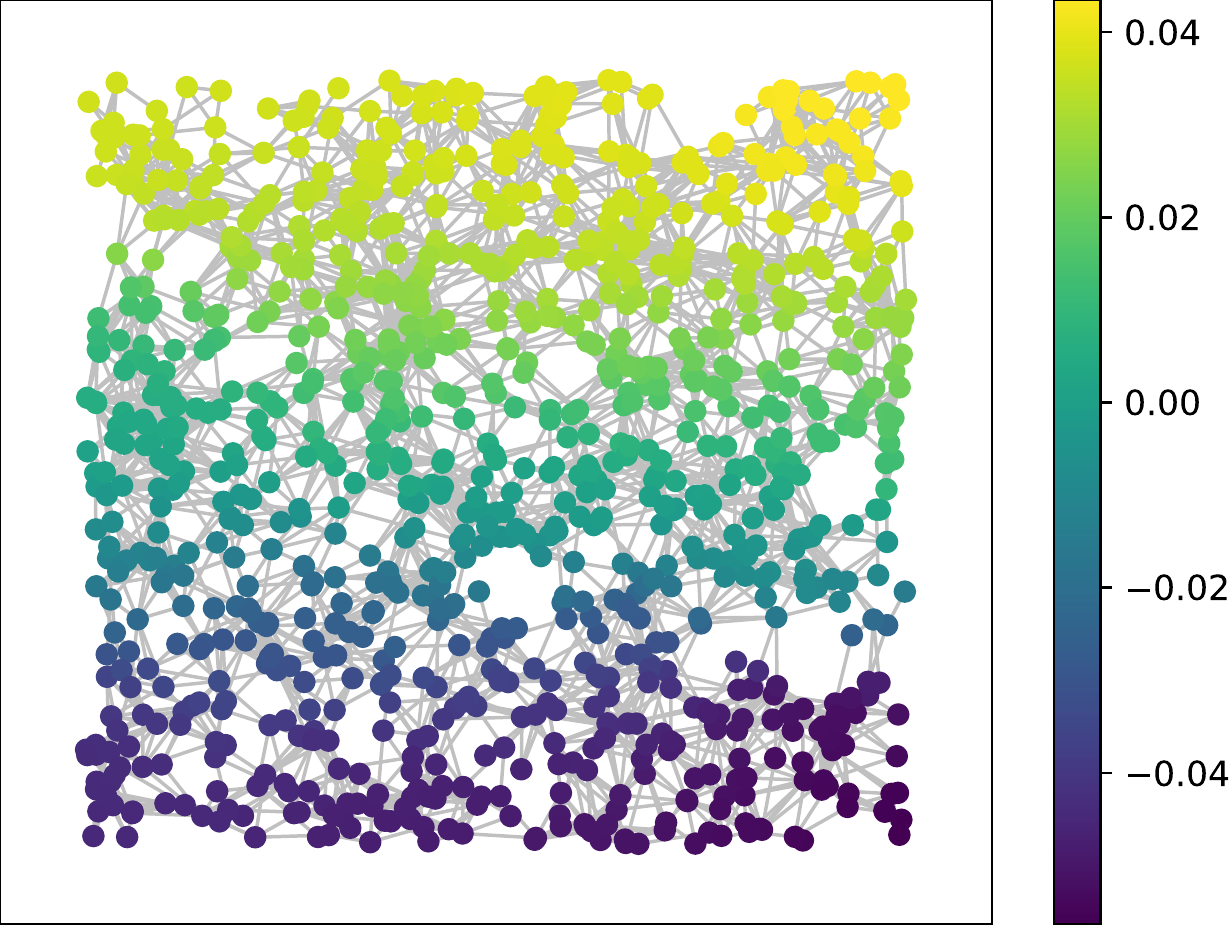}     \\  \addlinespace[0pt]
		\rothead{\centering\footnotesize Swarthmore \\Facebook} 
		&   \includegraphics[width=\hsize,valign=m]{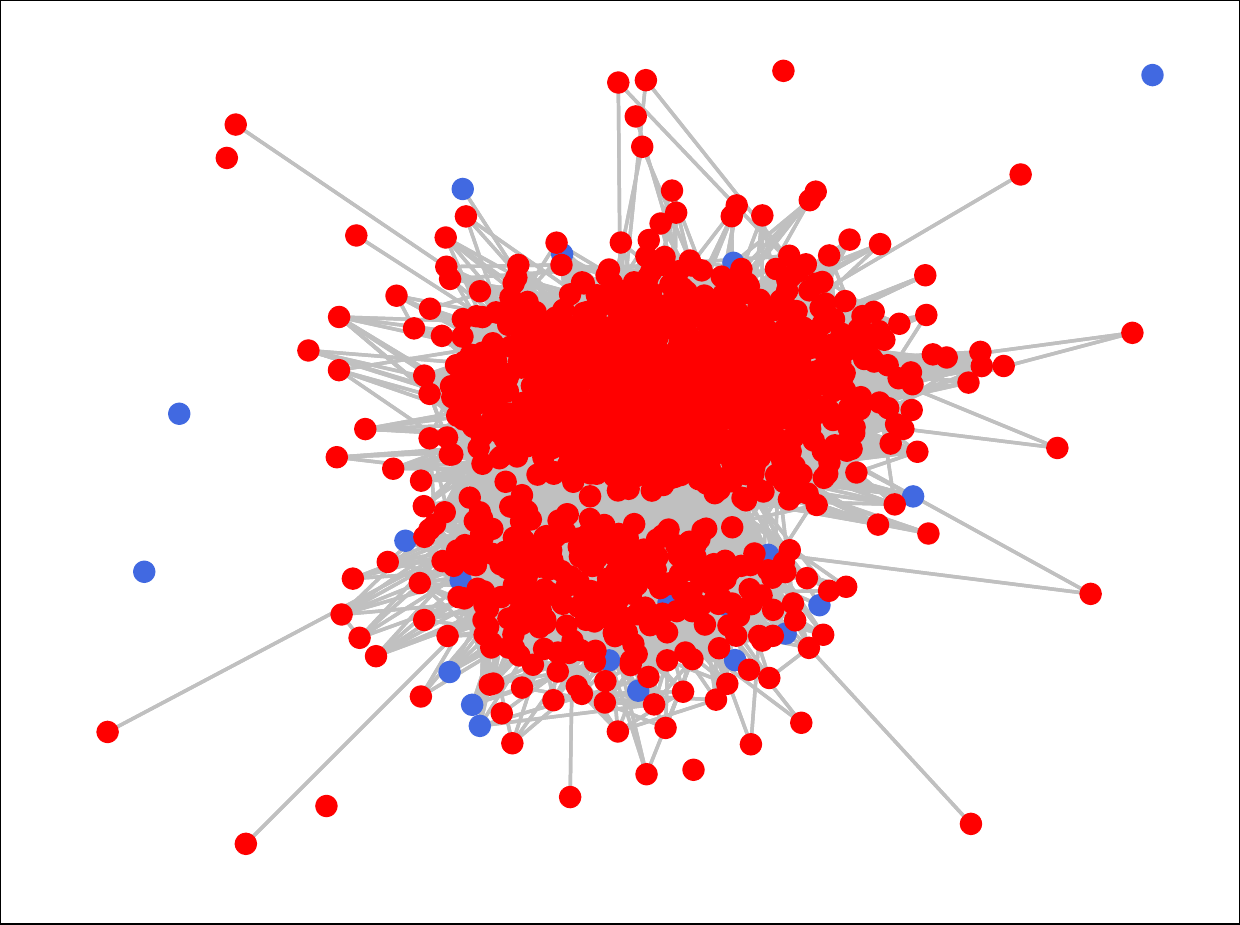}
		&   \includegraphics[width=\hsize,valign=m]{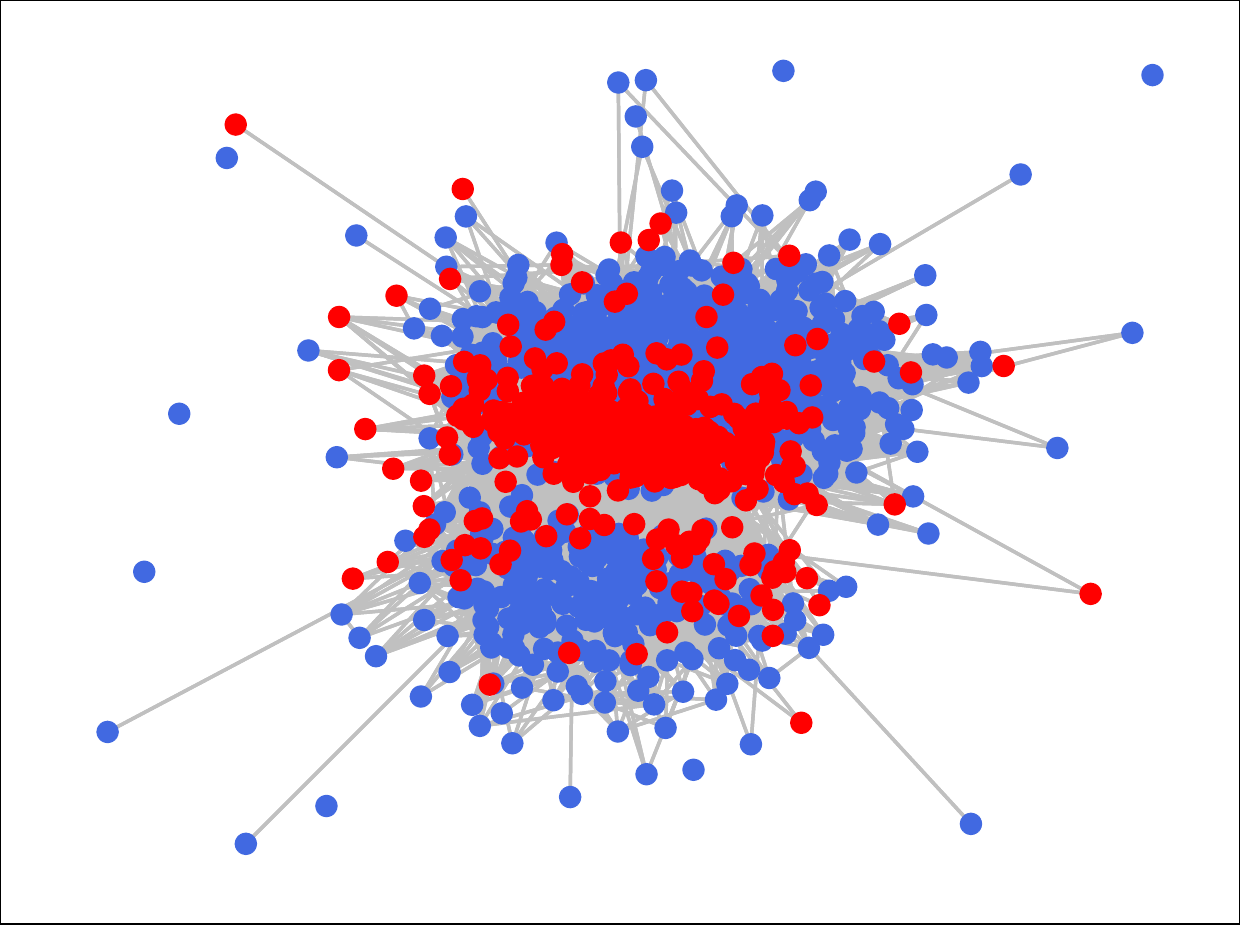}
		&   \includegraphics[width=\hsize,valign=m]{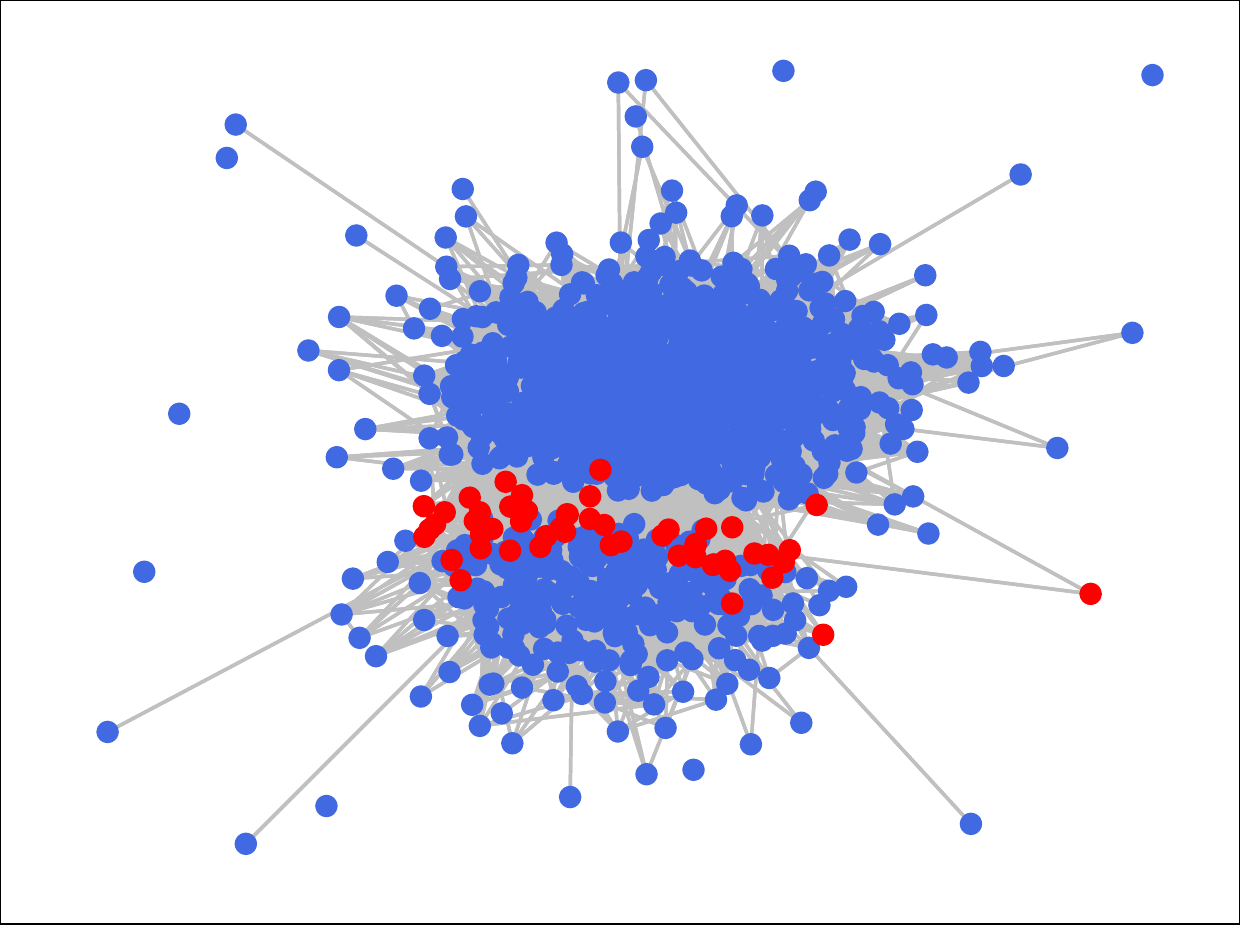}       
		&   \includegraphics[width=\hsize,valign=m]{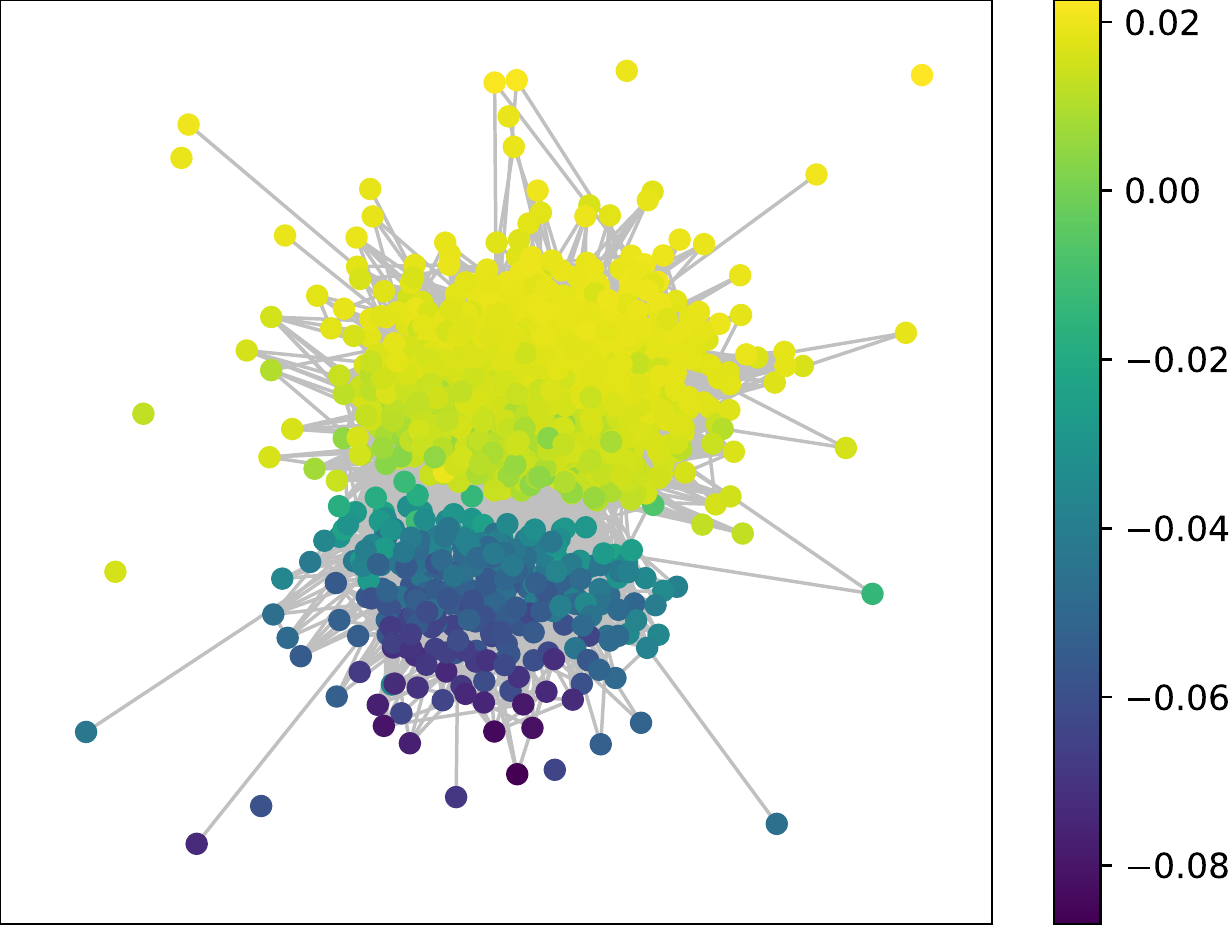}     \\  \addlinespace[0pt]
	\end{tabularx}
	\caption{A visualization of local agreement by number
		of DeGroot iterations for three social networks. Nodes are colored blue for individuals
		with at least 2/3 of their neighbors on the same side
		of the mean opinion, and red otherwise. The last column shows the normalized differences from the
		mean opinion at equilibrium 
		($\bar{\mathbf{s}}^*$ 
		from Theorem \ref{thm:second_eig_scaling}). In all cases, strong clusters of high local agreement emerge, which may lead to increased perceptions of opinion polarization.
	}
	\label{fig:local}
\end{figure}

%
%We see that the nodes quickly converge to a stable local
%agreement value that is quite close to the equilibrium
%local agreement.
%When we color the equilibrium opinion by difference from mean,
%we see clear communities of similar opinions with gradual
%transitions in opinion to other communities.
%%
%Each row follows a specific graph: 5-SBM, Geometric, and Swarthmore
%(chosen for its relatively small size) from Facebook100
%In the first three columns, blue nodes represent individuals
%with at least 2/3 of their neighbors on the same side
%of the mean opinion while red nodes represent individuals with
%less than 2/3 of their neighbors on the same side of the
%mean.
%The first column uses the random starting opinion,
%the second column after 10 iterations of DeGroot, and
%the third column uses the equilibrium opinion.
%The last column shows the equilibrium opinion where
%the color corresponds to the difference from the mean
%opinion.

%In this section, we introduce a new measure of polarization that does not converge to 0 in the DeGroot model: $\emph{local agreement}$. Informally, the local agreement of a node in a graph is the percentage of that node's neighbors agreeing with (i.e. on the same side of the mean as) $i$.
% We first derive a linear algebraic form of local agreement. Then, we analyze local agreement in the limit, showing a relationship between local agreement and the second eigenvalue of $\mathbf{D}^{-1/2}\mathbf{A}, \lambda_2$. Finally, we provide experimental results supporting this relationship between local agreement $y$ and $\lambda_2$. 

To better understand the steep increase in this group-based polarization metric theoretically, we show that for an unweighted, regular graph $G$,  average local agreement has a simple linear algebraic form. Ultimately, the following claim will help us relate the measure to spectral properties of the underlying social graph $G$.

\begin{claim} \label{def:local_agreement}
	Let $G$ be an unweighted $d$-regular graph with no self-loops. Let $\mathbf{z}$ be a vector of opinions and let $\mathbf{s}= \sign(\mathbf{z} - \mean(\mathbf{z}) \cdot \vec{\mathbf{1}})$. Then, the average local agreement $\mathcal{L}(G,\mathbf{z})$ equals:
\begin{align*} 
	\mathcal{L}(G,\mathbf{z})  &= \frac{\mathbf{s}^T\mathbf{A}\mathbf{s}}{2nd} + \frac{1}{2}  &\text{where \,\,} \mathbf{s} &= \sign(\mathbf{z} - \mean(\mathbf{z}) \cdot \vec{\mathbf{1}}).
\end{align*}
\end{claim}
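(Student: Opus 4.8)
The plan is to exploit the fact that $\mathbf{s}$ is a $\pm 1$ vector in order to linearize the agreement indicator and then recognize the resulting double sum as a quadratic form in the adjacency matrix. First I would use $d$-regularity to replace each $d_i$ by the common degree $d$, pulling the factor $\frac{1}{nd}$ out front so that $\mathcal{L}(G,\mathbf{z}) = \frac{1}{nd}\sum_{i=1}^n \sum_{j\in\mathcal{N}(i)} \mathbbm{1}[s_i = s_j]$.

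The key step is the identity $\mathbbm{1}[s_i = s_j] = \frac{1 + s_i s_j}{2}$, which is valid because $s_i, s_j \in \{+1,-1\}$: the product $s_i s_j$ equals $+1$ exactly when the two signs agree and $-1$ otherwise. Substituting this and splitting the double sum into its constant and bilinear parts yields two contributions. The constant part is $\frac{1}{2}\sum_{i=1}^n\sum_{j \in \mathcal{N}(i)} 1 = \frac{1}{2}\sum_{i=1}^n d_i = \frac{nd}{2}$, again using $d$-regularity.

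For the bilinear part, I would use that for an unweighted graph $A_{ij} = 1$ precisely when $j \in \mathcal{N}(i)$ and $A_{ij} = 0$ otherwise, so that $\sum_{i}\sum_{j\in\mathcal{N}(i)} s_i s_j = \sum_{i,j} A_{ij}\, s_i s_j = \mathbf{s}^T\mathbf{A}\mathbf{s}$. The no-self-loop assumption makes $A_{ii} = 0$, so the diagonal contributes nothing; in any case, since $s_i^2 = 1$, any diagonal terms would only shift $\mathbf{s}^T\mathbf{A}\mathbf{s}$ by a constant, so the assumption is really there to keep $\mathcal{N}(i)$ clean. Combining the two parts gives $\mathcal{L}(G,\mathbf{z}) = \frac{1}{nd}\left(\frac{nd}{2} + \frac{1}{2}\mathbf{s}^T\mathbf{A}\mathbf{s}\right) = \frac{1}{2} + \frac{\mathbf{s}^T\mathbf{A}\mathbf{s}}{2nd}$, which is exactly the claimed expression.

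I do not anticipate a genuine obstacle: the whole argument is an exact algebraic rewriting with no limiting or approximation step. The only thing requiring care is the bookkeeping of the normalization — specifically, keeping the factor $\frac{1}{2}$ introduced by the linearization identity separate from the $\frac{1}{n}\frac{1}{d}$ averaging factor, and correctly counting the $nd$ ordered neighbor pairs (each undirected edge contributing twice, once from each endpoint), which is precisely what produces the $2nd$ in the denominator.
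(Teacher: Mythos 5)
Your proof is correct and is essentially the same argument as the paper's: both reduce to the observation that agreements minus disagreements over ordered neighbor pairs equals $\mathbf{s}^T\mathbf{A}\mathbf{s}$ while agreements plus disagreements equals $nd$. The only difference is presentational --- you linearize the indicator via $\mathbbm{1}[s_i = s_j] = \tfrac{1+s_is_j}{2}$, whereas the paper reaches the same two identities by explicitly counting agreeing and disagreeing neighbors $a_i, b_i$ per node.
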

\begin{proof}
	For a node $i$, let $p_i= \sum_{j\in \mathcal{N}(i)} \mathbbm{1}[s_j = +1]$ denote the number of nodes in $\mathcal{N}(i)$ that are on the positive side of the mean and let $q_i = \sum_{j\in \mathcal{N}(i)} \mathbbm{1}[s_j = -1]$ denote the number of nodes on the negative side of the mean.  Let $a_i$ denote the number of nodes in $\mathcal{N}(i)$ that agree with node $i$ (i.e., are on the same side of the mean) and let $b_i$ denote the number of nodes that disagree. We can write: 
	\begin{align*}
		a_i &=  \begin{cases} 
			p_i & \text{ if } s_i = +1 \\
			q_i & \text{ if } s_i = -1 \\
		\end{cases} 
	&b_i =  \begin{cases} 
		p_i & \text{ if } s_i = -1 \\
		q_i & \text{ if } s_i = +1 \\
	\end{cases} 	
	\end{align*}
Observe that the $i^\text{th}$ entry of $\mathbf{A}\mathbf{s}$ equals $p_i - q_i$, and thus:
\begin{align*}
   \mathbf{s}^T\mathbf{A} \mathbf{s}
   = \sum_{i =1}^n s_i (p_i - q_i) = \sum_{i =1}^n a_i - b_i.
\end{align*}
Next note that $a_i + b_i = d$ and thus $nd = \sum_{i =1}^n a_i + b_i$. So we have $\mathbf{s}^T\mathbf{A}\mathbf{s} + nd = \sum_{i =1}^n 2a_i$.
Dividing by $2nd$ gives the result because $\mathcal{L}(G,\mathbf{z}) = \frac{1}{n} \sum_{i =1}^n \frac{a_i}{d}$.
\end{proof}

With Claim \ref{def:local_agreement} in place, we make the following observation:
\begin{observation}\label{obs:local_limit_2ndeig}
	For an unweighted graph $G$, we can approximate the equilibrium average local agreement $\lim_{t\rightarrow \infty} \mathcal{L}(G,\mathbf{z}^{(t)})$ by 
	\begin{align*}
		\lim_{t\rightarrow \infty} \mathcal{L}(G,\mathbf{z}^{(t)}) \approx \frac{\lambda_2}{2} + \frac{1}{2},
	\end{align*}
where $\lambda_2$ is the second eigenvalue of $G$'s normalized adjacency matrix.
\end{observation}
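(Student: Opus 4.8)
The plan is to chain together Corollary \ref{corollary:invariant} and Claim \ref{def:local_agreement}, and then replace a sign-vector quadratic form by the corresponding eigenvector Rayleigh quotient. Since average local agreement is a group-based measure satisfying Definition \ref{def:groupbased} and is continuous in $\mathbf{z}$, Corollary \ref{corollary:invariant} gives $\lim_{t\rightarrow\infty}\mathcal{L}(G,\mathbf{z}^{(t)}) = \mathcal{L}(G,\mathbf{v}_2)$, reducing the problem to evaluating the measure on the second eigenvector. For a $d$-regular graph we have $\mathbf{D} = d\mathbf{I}$, so $\mathbf{D}^{-1}\mathbf{A} = \frac{1}{d}\mathbf{A}$ is symmetric and shares its eigenvectors with $\mathbf{A}$; in particular $\mathbf{v}_1$ is a multiple of $\vec{\mathbf{1}}$, and since $\mathbf{v}_2$ is orthogonal to $\mathbf{v}_1$ we have $\mean(\mathbf{v}_2) = 0$. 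Hence the mean-centering in Claim \ref{def:local_agreement} is vacuous here, and the sign vector associated with $\mathbf{v}_2$ is simply $\sign(\mathbf{v}_2)$, which I will denote $\mathbf{s}_2$.

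Applying Claim \ref{def:local_agreement} with $\mathbf{z} = \mathbf{v}_2$ then yields
\begin{align*}
\lim_{t\rightarrow\infty}\mathcal{L}(G,\mathbf{z}^{(t)}) = \mathcal{L}(G,\mathbf{v}_2) = \frac{\mathbf{s}_2^T\mathbf{A}\mathbf{s}_2}{2nd} + \frac{1}{2},
\end{align*}
so it remains to argue $\frac{\mathbf{s}_2^T\mathbf{A}\mathbf{s}_2}{nd}\approx\lambda_2$. The key identity is that $\mathbf{v}_2$ is an eigenvector of $\mathbf{A}$ with eigenvalue $d\lambda_2$, so its normalized Rayleigh quotient is exactly $\frac{\mathbf{v}_2^T\mathbf{A}\mathbf{v}_2}{\|\mathbf{v}_2\|_2^2} = d\lambda_2$. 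Because $\mathbf{s}_2 = \sign(\mathbf{v}_2)$ and $\|\mathbf{s}_2\|_2^2 = n$, the quantity $\frac{\mathbf{s}_2^T\mathbf{A}\mathbf{s}_2}{nd}$ is precisely this same normalized Rayleigh quotient, but evaluated at the sign vector rather than at $\mathbf{v}_2$. I would argue the two are close by observing that $\mathbf{s}_2$ and $\mathbf{v}_2$ share entrywise signs, and that the quadratic form is unchanged under the substitution exactly when $\mathbf{v}_2$ is two-valued, i.e.\ proportional to a $\pm 1$ vector, which is precisely the fully clustered/polarized configuration. Substituting $\frac{\mathbf{s}_2^T\mathbf{A}\mathbf{s}_2}{nd}\approx\lambda_2$ into the display gives $\frac{\lambda_2}{2} + \frac{1}{2}$, as claimed.

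The main obstacle is justifying this replacement of the sign vector by the eigenvector in the Rayleigh quotient, which is exactly why the statement is phrased as an approximation rather than an exact identity. The substitution is exact when $\mathbf{v}_2$ has two distinct entry values of equal magnitude and opposite sign; for the block-model and real-world graphs studied here, $\mathbf{v}_2$ is strongly bimodal, so replacing it by its sign distorts $\mathbf{v}_2^T\mathbf{A}\mathbf{v}_2$ only mildly. A fully rigorous version would bound $\left|\frac{\mathbf{s}_2^T\mathbf{A}\mathbf{s}_2}{nd} - \lambda_2\right|$ in terms of how far $\sqrt{n}\,\mathbf{v}_2/\|\mathbf{v}_2\|_2$ deviates from $\mathbf{s}_2$, controlling the cross terms via the spectral gap and the mass of $\mathbf{v}_2$ away from its two dominant values. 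I would not expect to need anything beyond elementary Rayleigh-quotient perturbation for such a bound, but it is not tight enough in general to upgrade the observation to an exact equality.
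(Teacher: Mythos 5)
Your proposal is correct and follows essentially the same route as the paper: reduce to $\mathcal{L}(G,\mathbf{v}_2)$ via Corollary \ref{corollary:invariant}, apply Claim \ref{def:local_agreement} in the $d$-regular case, and then approximate the sign-vector quadratic form $\mathbf{s}_2^T\mathbf{A}\mathbf{s}_2/(nd)$ by the Rayleigh quotient of $\mathbf{v}_2$, which is exact precisely when the unit eigenvector has all entries of magnitude $1/\sqrt{n}$. The paper makes the identical substitution (after rewriting via $\mathbf{D}^{-1/2}\mathbf{A}\mathbf{D}^{-1/2}$) and likewise leaves the closeness of the sign vector to the eigenvector as a heuristic, citing connections to balanced-cut/spectral-partitioning results for formalization.
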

According to this claim, we expect to see high equilibrium local agreement -- i.e., \emph{increasing} polarization -- in any graph with a second eigenvalue close to $1$, which includes any graph with strong community, and thus most natural social networks. For example, the Facebook100 networks had an average second eigenvalue of $.871$. Observation \ref{obs:local_limit_2ndeig} predicts that this would lead to a mean equilibrium average local agreement of approximately $.936$, which is extremely close to the observed mean of $.948$.

To establish Observation \ref{obs:local_limit_2ndeig}, again assume that $G$ is $d$-regular with no self-loops. Note that for a regular graph, the second eigenvalue of $\mathbf{D}^{-1}\mathbf{A}$ is equal to that of $\mathbf{D}^{-1/2}\mathbf{A}\mathbf{D}^{-1/2}$. 
By Corollary \ref{corollary:invariant}, we have that 
$\lim_{t\rightarrow \infty} \mathcal{L}(G,\mathbf{z}^{(t)}) =\mathcal{L}(G,\mathbf{v}_2)$. And by Claim \ref{def:local_agreement}:
\begin{align*}
	\mathcal{L}(G,\mathbf{v}_2) &= \frac{\sign(\mathbf{v}_2^T) \mathbf{A} \sign(\mathbf{v}_2)}{2nd} + \frac{1}{2} \\
	&= \frac{\sign(\mathbf{v}_2^T) \mathbf{D}^{-1/2}\mathbf{A}\mathbf{D}^{-1/2} \sign(\mathbf{v}_2)}{2n} + \frac{1}{2}
\end{align*}
Observation \ref{obs:local_limit_2ndeig} then immediately follows by noticing that
\begin{align*}
\sign(\mathbf{v}_2^T) \mathbf{D}^{-1/2}\mathbf{A}\mathbf{D}^{-1/2} \sign(\mathbf{v}_2) \approx n \mathbf{v}_2^T\mathbf{D}^{-1/2}\mathbf{A}\mathbf{D}^{-1/2} \mathbf{v}_2 = n\lambda_2.
\end{align*}
The approximation is exact if all entries in the unit vector $\mathbf{v}_2$ have magnitude $1/\sqrt{n}$. It tends to hold a close approximation in other networks, which can be formalized via well-known connections between balanced cut problems and the second eigenvector \cite{McSherry:2001,Spielman:2019}.

\begin{figure}
	\centering
	\includegraphics[scale=.5]{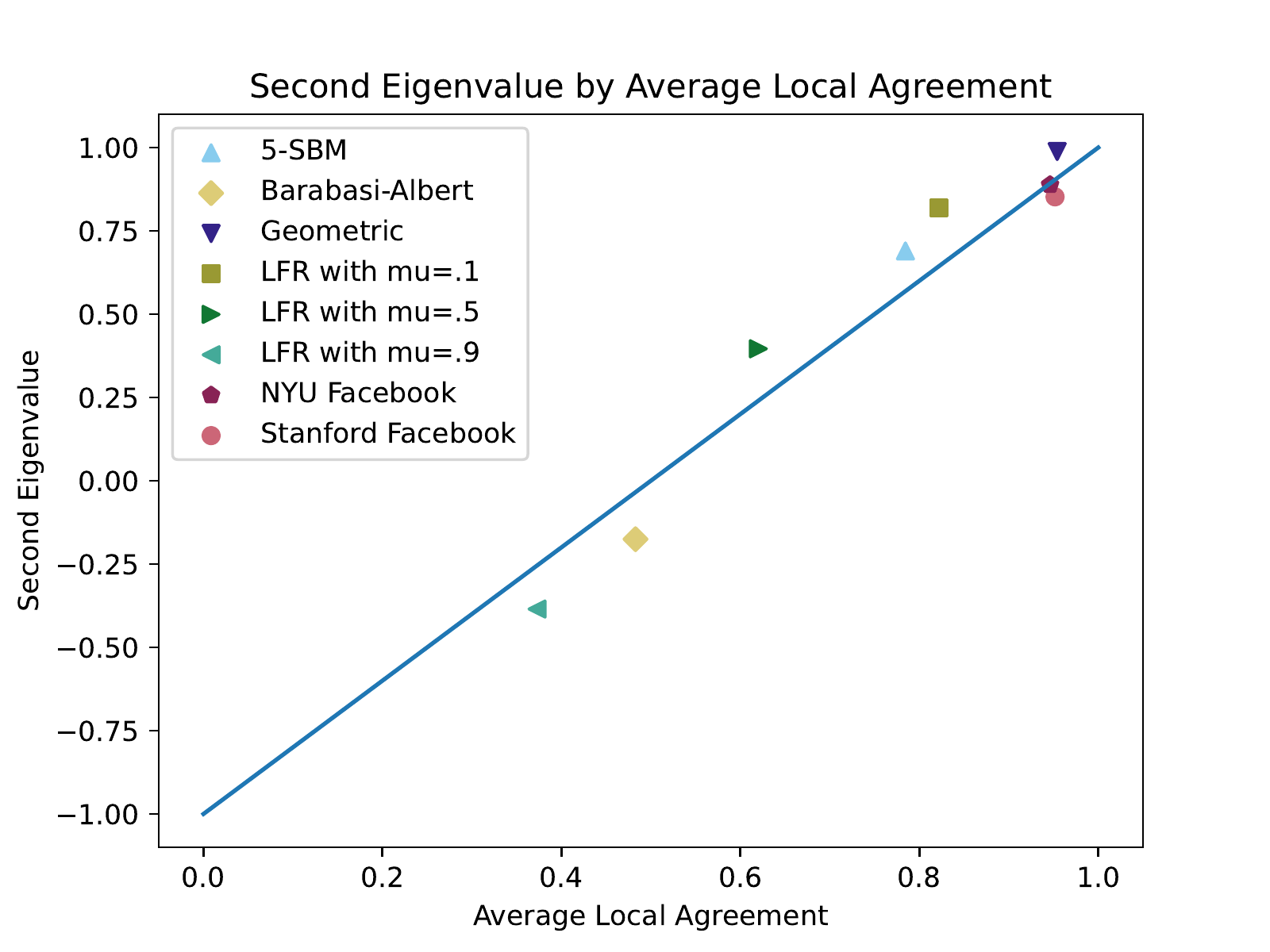}
	\vspace{-1em}
	\caption{Average local agreement at equilibrium
		plotted against the second normalized  adjacency matrix eigenvalue for several random graphs generated with the NetworkX package \cite{HagbergSchultSwart:2008}, and for the NYU and Stanford Facebook graphs. The values closely align with the linear relationship predicted by Observation 
		\ref{obs:local_limit_2ndeig} (plotted as a solid line).
	}
	\label{fig:ratio}
\end{figure}

In Figure \ref{fig:ratio}, we empirically confirm
the relationship described
in Observation \ref{obs:local_limit_2ndeig} by examining a variety of random graphs with widely varying second eigenvalue. 
We find that the correlation between average local agreement
and second eigenvalue in the Facebook100 data set is statistically
significant  ($p=5e^{-5}$)
with a Pearson correlation of $r=.392$.

%Interestingly, the median average
%local agreement and second eigenvalue
%of the hundred empirical graphs
%in the Facebook100 data set are both quite high at
%about .948 and .871 respectively.
%This suggests that social networks in the real world
%tend to have high levels of local agreement.

Finally, we comment on the rate at which average local agreement converges to its equilibrium value. Since this rate depends on how quickly the normalized difference vector converges to $\bar{s}^*$ under the DeGroot model, we expect it to scale linearly with the inverse of the \emph{second} eigenvalue gap, $\frac{|\lambda_2|-|\lambda_{3}|}{|\lambda_2|}$. We confirm this relationship on the Facebook100 data set, where we see a statistically significant ($p=7e^{-6}$) correlation
between inverse second eigengap and average number of
iterations until convergence to the final average local agreement when starting with a random opinion vector. The Pearson correlation coefficient of the relationship is $r=.451$.

In contrast, the rate at which the opinion vector converges to $\mathbf{z}^*$ depends inversely on the \emph{first} eigengap $\frac{|\lambda_1|-|\lambda_{2}|}{|\lambda_1|}$. As such, when the second eigengap is large compared to the first, we expect local agreement to increase more quickly than opinion variance decreases, which might contribute to perceptions of growing polarization.

%Table \ref{tab:local} shows experimental results for the second eigenvalue, first eigengap, and second eigengap, taken as the median over all the Facebook graphs we study. Note that for the Facebook graphs, the median second eigenvalue of these graphs is large, by Observation \ref{obs:local_limit_2ndeig} we also expect local agreement to be large. 

%The punchline? second eigenvalue is large (close to 1) for many natural social networks 
%-histogram of 2nd eigenvalue of DAD matrix for facebook networks
%-we expect local agreement in the limit to also be close to 1
%-in random graph, 50 percent, we confirm that it increases 

%\begin{table}
%  \begin{tabular}{cc}
%    \toprule
%    Eigenquantity & Median Value\\
%    \midrule
%    Second Eigenvalue & 0.8765579663974004\\
%    First Eigengap & 0.12903619764976115\\
%    Second Eigengap & 0.06840297961951955\\
%  \bottomrule
% \end{tabular}
%  \caption{\label{tab:local}Second Eigenvalue-Related Quantities for Facebook Graphs \teal{Add mean and round values.}}
%\end{table}

\section{Conclusion}
In this work we established that  natural group-based polarization measures display interesting dynamics under the standard DeGroot opinion formation model. Unlike heavily studied variance-based measures, we showed both empirically and theoretically that group-based measures can increase over time, and often do increase quite significantly in natural social networks.

We leave a number of questions for future research. As discussed, recent work on mathematical models of opinion dynamics has sought to understand the impact of outside actors (who can modify the graph $G$ is some way) on individual opinions and polarization \cite{GaitondeKleinbergTardos:2021,AbebeChanKleinberg:2021}. There is little work on how such modifications impact group-based polarization, and if they can accelerate its emergence.

Another challenging question it to determine the ``right'' group-based measure of polarization for use in opinion dynamics studies. Our work establishes that group-based measures in general are more in line with real-world polarization than variance-based measures, but more in-depth studies of e.g. political opinion surveys would be needed to determine exactly \emph{which} group-based measures best align with perceived polarization. Currently, there is some experimental evidence for the value of ideological alignment as a meaningful polarization metric \cite{levendusky2009partisan,FiorinaAbramsPope:2005}, but statistical measures of bimodality and ``local'' metrics have received less attention.

\bibliography{references}
\bibliographystyle{ACM-Reference-Format}

\end{document}